\newtheorem{definition}{Definition}
\newtheorem{theorem}{Theorem}
\newtheorem{lemma}{Lemma}
\newtheorem{remark}{Remark}
\newtheorem{proposition}{Proposition}
\title{Generalization Bounds for Neural Belief Propagation Decoders\vspace{10pt}
\author{Sudarshan Adiga, Xin Xiao, Ravi Tandon, Bane Vasi\'c, Tamal Bose}
\affil{Department of Electrical and Computer Engineering\\
University of Arizona, Tucson, AZ, USA.\\
E-mail: {\{adiga, 7xinxiao7, tandonr, vasic, tbose\}}@arizona.edu}}
\begin{document}
\maketitle
\newcommand\blfootnote[1]{%
  \begingroup
  \renewcommand\thefootnote{}\footnote{#1}%
  \addtocounter{footnote}{-1}%
  \endgroup
}

\begin{abstract}
\noindent  Machine learning based approaches are being increasingly used for designing decoders for next generation communication systems. One widely used framework is \text{neural belief propagation} (NBP), which unfolds the belief propagation (BP) iterations into a deep neural network and the parameters are trained in a data-driven manner. NBP decoders have been shown to improve upon classical decoding algorithms. In this paper, we investigate the generalization capabilities of NBP decoders. Specifically, the generalization gap of a decoder is the difference between empirical and expected bit-error-rate(s). We present new theoretical results which bound this gap and show the dependence on the \textit{decoder complexity}, in terms of code parameters (blocklength, message length, variable/check node degrees), decoding iterations, and the training dataset size. Results are presented for both regular and irregular parity-check matrices. To the best of our knowledge, this is the first set of theoretical results on generalization performance of neural network based decoders. 
We present experimental results to show the dependence of generalization gap on the training dataset size, and  decoding iterations for different codes. 
\end{abstract}

\section{Introduction}
\footnote{This work was supported by NSF grants CAREER 1651492, CCF-2100013, CNS-2209951, CNS-1822071, CIF- 1855879, CCSS-2027844, CCSS-2052751, and NSF-ERC 1941583. Bane Vasi\'{c} was also supported by the Jet Propulsion Laboratory, California Institute of Technology, under a contract with the National Aeronautics and Space Administration and funded through JPL’s Strategic University Research Partnerships (SURP) Program. Bane Vasi\'{c} has disclosed an outside interest in Codelucida to the University of Arizona. Conflicts of interest resulting from this interest are being managed by The University of Arizona in accordance with its policies. This work was presented in part at the 2023 IEEE International Symposium on Information Theory \cite{adiga2023generalization}. 
}
\noindent Deep neural networks have emerged as an important tool in 5G and beyond for hybrid beamforming \cite{li2019deep, huang2018unsupervised, peken2020deep}, channel encoding, decoding, and estimation \cite{nachmani2016learning, nachmani2018deep,lugosch2017neural, vasic2018learning,nachmani2019hyper,doan2019neural,buchberger2020pruning,satorras2021neural,nachmani2022neural, gruber2017deep, shlezinger2019viterbinet, seo2018decoding, soltani2019deep,dai2020deep,eldar2022machine},  modulation classification \cite{erpek2020deep, peng2018modulation, liu2017deep}, and physical layer algorithms \cite{fritschek2019deep, o2017introduction, wang2017deep}.     
Within the context of channel decoding, prior works have demonstrated that deep neural network based decoders achieve lower bit/frame error rates than  conventional iterative decoding algorithms such as belief propagation in several  signal-to-noise ratio (SNR) regimes \cite{nachmani2016learning, nachmani2018deep, nachmani2019hyper, gruber2017deep, shlezinger2019viterbinet, seo2018decoding}.
In another line of works \cite{kim2018deepcode,jiang2019turbo,choi2019neural}, deep neural networks
have been used to jointly design \textit{both} encoder and decoder.  
Given the expansive applicability of deep neural networks for channel encoding and decoding, we note here that determining neural network architectures that generalize well to large block length codewords is an active area of research.

Iterative decoding algorithms (such as belief propagation (BP)) are commonly deployed for decoding linear codes; and are known to be equivalent to maximum aposteriori (MAP) decoding when the Tanner graph does not contain short cycles \cite{tang2005codes}. 
 However, if the Tanner graph contains short cycles, then BP can be sub-optimal i.e., the messages passed between the variable nodes and parity check nodes cannot correctly recover the transmitted codeword \cite{liu2012low, zhang2011causes, nachmani2016learning}.
  One approach to mitigate the effect of short cycles is by generalizing the BP algorithm  by means of a deep learning based approach \cite{nachmani2016learning,lugosch2017neural,nachmani2018deep,vasic2018learning,nachmani2019hyper,doan2019neural,buchberger2020pruning,satorras2021neural,nachmani2022neural}.
     It is shown that the weights learnt by optimizing over the training data ensure that any message repetition between the variable nodes and parity check nodes do not adversely impact the performance of BP based decoders \cite{nachmani2016learning, nachmani2018deep}.   
   We refer to this class of belief propagation decoders as \textit{Neural Belief Propagation} (NBP) decoders. 
   The salient aspect of NBP decoders is that its structure is determined from the corresponding Tanner graph, and therefore its architecture is a function of the code parameters itself. Several variants of NBP decoders have been a subject of recent study \cite{lugosch2017neural, vasic2018learning,nachmani2019hyper,doan2019neural, buchberger2020pruning, satorras2021neural, nachmani2022neural}. 
In \cite{lugosch2017neural}, the authors propose a hardware efficient implementation of the NBP decoder by reducing the number of matrix multiplications.
  The authors in \cite{nachmani2019hyper} implement message passing on graph neural networks wherein the output of each variable node is computed using a sub-network. 
  An interesting variant of NBP decoder was proposed in \cite{buchberger2020pruning} in which the unimportant check nodes were pruned in each decoding iteration thereby resulting in a architecture that corresponds to a different parity check matrix at each iteration. The authors in \cite{satorras2021neural} propose correcting the output of conventional decoding algorithms using a NBP decoder thereby combining the desirable features of both conventional and NBP decoders.
  A knowledge distillation based technique to learn the node activations in NBP decoder was proposed in \cite{nachmani2022neural}.
   
    Post-training, it is important that the NBP decoder achieves low bit-error-rate (BER) on unseen noisy codewords.
  Prior works on NBP decoders \cite{lugosch2017neural, vasic2018learning,nachmani2019hyper,doan2019neural, buchberger2020pruning, satorras2021neural, nachmani2022neural} are  empirical; 
to the best of our knowledge there are no theoretical guarantees on the performance of NBP decoders on unseen data.
To this end, given a NBP decoder, our goal is to understand how its architecture impacts its generalization gap  \cite{mohri2018foundations}, defined as the difference between empirical and expected BER(s).   
Motivated by the above discussion, we ask the following fundamental question: \textit{Given a NBP decoder, what is the expected performance on unseen noisy codewords? And how is the generalization gap related to code parameters, neural decoder architecture and training dataset size?} 

There are several approaches to obtaining generalization gap bounds in the theoretical machine learning literature, which can be classified into two primary categories.
The first category comprises data-independent approaches, such as VC-dimension, Rademacher complexity of the function class, and PAC-Bayes \cite{bartlett2019nearly,sontag1998vc,dziugaite2017computing,liao2020pac,mohri2008rademacher,mohri2018foundations,bartlett2002rademacher,bartlett2017spectrally,mansour2009domain,neyshabur2015norm}.
The second category focuses on data-dependent approaches, which analyze the mutual information between the input dataset and the algorithm output \cite{xu2017information,bu2020tightening}.
VC-dimension is a measure of the number of samples required to find a probably approximately correct (PAC) hypothesis from the entire hypothesis class \cite{kearns1994introduction, valiant1984theory, bousquet2021theory}.
Rademacher complexity measures the correlation between the function class and the random labels \cite{mohri2018foundations}; it is known that generalization bounds obtained via Rademacher complexity (and it's variants) are tighter than the bounds obtained using the VC-dimension approach \cite{mohri2008rademacher}.
Another method is the PAC-Bayes analysis, where generalization gap is bounded by the Kullback-Leibler divergence between the prior and the posterior on the learned weights.
The prior is chosen to be a multi-variate normal distribution centred around the initial weights \cite{langford2001not,neyshabur2017exploring, dziugaite2017computing,neyshabur2017pac}.
In  \cite{keskar2016large}, the authors propose the measuring the change in the training error with respect to perturbations in the model weights as a measure of its generalizability. 
Recent literature has increasingly focused on analyzing machine learning-based communication systems, particularly in terms of the generalization gap. 
For instance, \cite{weinberger2021generalization} investigates generalization bounds in the context of codebook design and decoder selection in both uncoded and coded communication systems. 
This paper specifically examines a setting where the encoder and decoder are learned in a data-driven manner. 
In uncoded systems, the authors consider minimum distance decoders, while in coded systems, they focus on learning decoders by maximizing the mutual information between the input and the channel output. 
The study highlights how the generalization gap scales with codebook size, the number of noise samples, and input dimensionality. 
Additionally, \cite{tsvieli2023learning} addresses the learning of decoders for predetermined codebooks. 
Drawing inspiration from the support vector machine paradigm, the authors propose a nearest neighbor decoder, the parameters of which are learned in a data-driven manner. 
The paper also derives bounds for the generalization gap of this learned decoder.
For the scope of this paper, we adopt the PAC learning framework and use Rademacher Complexity to understand the generalization gap of NBP decoders.  
 The notations introduced throughout the paper are summarized in Table \ref{tab: shorthand-notations-table-1-table-2}. We next summarize our main contributions.

  \begin{table*}[!t] 
\centering
\vspace{.1em}
\renewcommand{\arraystretch}{1.00}%
\vspace{.1em}
\begin{tabular}{p{.28\textwidth}p{.67\textwidth}}
\toprule
$\mathbf{y}[i] \rightarrow$ $i$-{th} index in  $\mathbf{y}$ & $\mathbf{H}[i, j] \rightarrow$   $i$-th row and $j$-th column entry in $\mathbf{H}$ \\
$n \rightarrow$ Blocklength &  $\mathcal{C} \rightarrow$ Linear block code of length $n$ and dimension $k$  \\
$k \rightarrow$  Dimension of the code & $f(\bm{\lambda}){[j]} \rightarrow$ $j$-th output bit of NBP decoder for input $\bm{\lambda}$ \\
$d_v \rightarrow$  Variable node degree &  $S = \{(\bm{\lambda}_j,\mathbf{x}_j)\}_{j=1}^{m} \rightarrow$  Dataset to train NBP decoder $f$  \\
$\mathbf{k} \rightarrow$  Message vector & $\mathbf{x}$, $\mathbf{y}$, $\mathbf{z} \rightarrow$  Channel input, output, and noise, respectively \\
$\kappa \rightarrow$ Code rate &  $\bm{\lambda}$, $\mathbf{\hat{x}} \rightarrow$  Decoder input, decoder output, respectively  \\
$T \rightarrow$ Decoding iterations &  $\mathcal{F}_T \rightarrow$ Function class of NBP decoders with $T$ iterations \\
$l_{\text{BER}}(\cdot) \rightarrow$ BER loss  &  $\mathcal{N}(\mathcal{F}_T ,\epsilon , \|\cdot\|_k) \rightarrow$ Covering number of $\mathcal{F}_T$ with respect to $k^{th}$ norm \\
$\mathcal{F}_{L,T} \rightarrow$ Hypothesis class & $\mathcal{M}(\mathcal{F}_T ,\epsilon , \|\cdot\|_k) \rightarrow$ Packing number of $\mathcal{F}_T$ with respect to $k^{th}$ norm\\
$\mathbf{H} \rightarrow$  Parity check matrix & $\mathcal{P}(\mathcal{F}_T ,\epsilon , \|\cdot\|_k) \rightarrow$ Packing of $\mathcal{F}_T$ with respect to the $k^{th}$ norm\\
$\mathcal{G} \rightarrow$ Tanner graph & $\mathcal{R}_{\text{BER}}(f) \rightarrow$ True risk of NBP decoder $f$ \\
$\mathcal{V} \rightarrow$ Variable nodes set  & $\mathcal{\hat{R}}_{\text{BER}}(f) \rightarrow$ Empirical risk of NBP decoder $f$  \\
$\mathcal{P} \rightarrow$ Parity check nodes set &  $R_{m} (\mathcal{F}_{L,T}) \rightarrow$  Empirical Rademacher complexity \\
$\mathcal{E} \rightarrow$ Edge set in graph $\mathcal{G}$ & $R_{m} (\mathcal{F}_{T}[j]) \rightarrow$  Bit-wise Rademacher complexity \\
$\mathbf{W}_i \rightarrow$ Weight matrices & $\{v_i,p_j\} \rightarrow$  Edge connecting variable node $v_i$, parity check node $p_j$ \\
$B \rightarrow$ Norm bounds & $\mathbf{v_t} \rightarrow$ Output of variable node hidden layer in $t$-{th} iteration \\
$b \rightarrow$ Absolute value bound & $\mathbf{p_t} \rightarrow$ Output of parity-check node hidden layer in $t$-{th} iteration \\
$b_{\lambda} \rightarrow$ Bit-wise bound on $\bm{\lambda}$ & $\mathbf{v_t}[\{l,m\}] \rightarrow$  Message from variable node $v_l$ to parity node $p_m$ \\
$w \rightarrow$ Bound on weights & $\mathbf{p_t}[\{l,m\}] \rightarrow$  Message from parity node $p_m$ to variable node $v_l$ \\
$||\cdot||_2 \rightarrow$ Spectral norm &  $\mathbf{W}_1[\{l,m\},l] \rightarrow$ Weight between  $l$-th input $\bm{\lambda}[l]$ and node $\mathbf{v_t}[\{l,m\}]$\\
$||\cdot||_{F} \rightarrow$ Frobenius norm  &  $\mathbf{W}_2[\{l,m\},\{l,m^{\prime}\}] \rightarrow$ Weight between nodes $\mathbf{v_t}[\{l,m\}]$,$\mathbf{p_t}[\{l,m^{\prime}\}]$ \\
$||\cdot||_{1} \rightarrow$ Max. column sum &  $\mathbf{W}_3[l,\{l,m\}] \rightarrow$ Weight between $\mathbf{v_T}[\{l,m\}]$ and $l$-th output $\mathbf{\hat{x}}[l]$\\
$||\cdot||_{\infty} \rightarrow$ Max. row sum & $\mathbf{W}_4[l,l] \rightarrow$ Weight between $\bm{\lambda}[l]$ and $\mathbf{\hat{x}}[l]$\\
$B_{W_i} \rightarrow$ Bound on $\|\mathbf{W}_i\|_2$ & 
$B_{w_i} \rightarrow$ $L_2$ norm bounds of rows in $\mathbf{W}_i$, where $i \in \{1,2,3,4\}$ \\
\bottomrule
\end{tabular}
\caption{Notations used in the paper.\label{tab: shorthand-notations-table-1-table-2}}
\end{table*}
   
\noindent \textbf{Main Contributions:} 
\begin{enumerate}
\item \textit{Generalization gap as a function of the covering number of the NBP decoder}:
In this paper, we first upper bound the generalization gap of a generic deep learning decoder as a function of the Rademacher complexity of the individual bits of the decoder output (which we denote as the bit-wise Rademacher complexity).
We next consider NBP decoders which belong to the class of belief propagation decoders whose architecture is a function of the code parameters.
 We upper bound the bit-wise Rademacher complexity as a function of the \textit{covering number} of the NBP decoder, which is the cardinality of the set of all decoders that can closely approximate the NBP decoder. 
  The covering number analysis provides an upper bound with a linear dependence of the generalization gap on spectral norm of the weight matrices and polynomial dependence on the decoding iterations. 
 The bound we obtain is tighter than the other approaches such as VC-dimension and PAC-Bayes approaches in which the upper bound exponentially depends on the decoding iterations.

\item \textit{Upper bounds on bit-wise Rademacher complexity for regular and irregular parity check matrices}:
We upper bound the covering number of NBP decoder in terms of the code-parameters (blocklength, variable node degree, check node degree), and the training dataset size for both regular and irregular parity check matrices.  
 From our results, we show that the generalization gap scales with the inverse of the square root of the dataset size, linearly with the variable node degree and the decoding iterations, and the square-root of the blocklength.
   To the best of our knowledge, this is the first result that determines upper bounds on the generalization gap as a function of the code-parameters.

\item \textit{Experimental evaluation of the generalization gap bounds}:
 We also present simulation results to validate our
theoretical findings. 
To the best of our knowledge, this is the first work that empirically studies the generalization gap of NBP decoders.
 In the experimental results, we consider binary phase shift keying (BPSK) modulation and additive white Gaussian noise (AWGN) channel. 
 We use Tanner code to illustrate the dependence of the generalization gap on the decoding iterations, and training dataset size. 
To study the dependence of the generalization gap on the blocklength, we consider two QC-LDPC parent codes, and generate descendent codes with smaller blocklengths by puncturing the parent code. 
 We assume that all-zero codewords are transmitted, and the NBP decoder is trained on the noisy realizations generated for a given channel signal-to-noise ratio (SNR). 
 In our empirical results, we observe that the generalization gap has a linear dependence on the decoding iterations, and it increases with the blocklength, thereby agreeing with the theoretically derived bounds. 
\end{enumerate}

\section{Preliminaries and Problem Statement}
\label{sec: preliminaries}
\begin{figure*}[!t]
	\begin{center}
		\includegraphics[scale = 0.28]{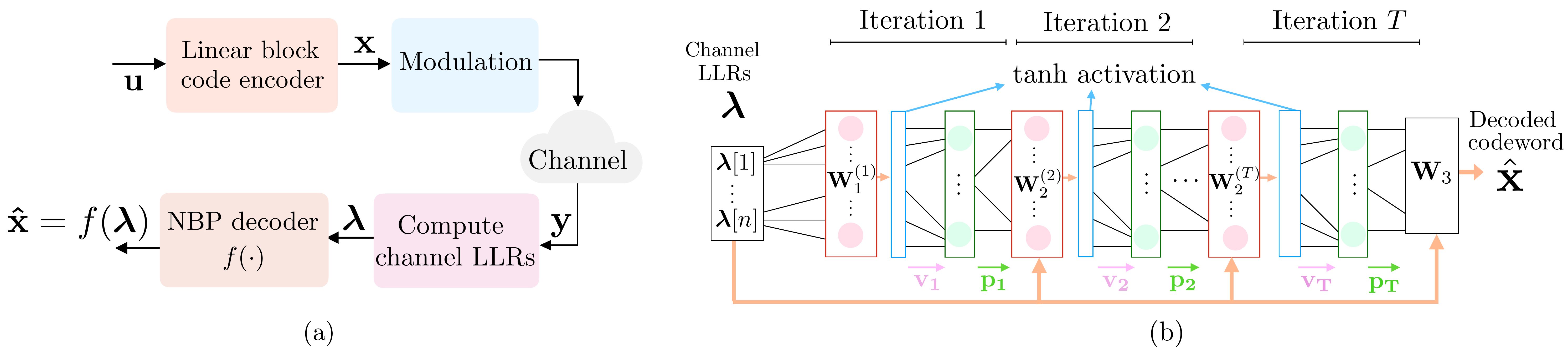}
	\caption{(a) End-to-End block diagram for communication using neural belief propagation (NBP) decoders for linear block codes; (b) Architecture of the NBP decoder for $T$ decoding iterations where each decoding iteration corresponds to $2$ hidden layers: (1) variable node layer, (2) parity check node layer. \label{fig: neural-network-decoder}}
 \vspace{-0.0cm}
		\end{center}
\end{figure*}
\noindent In Fig. \ref{fig: neural-network-decoder}, we consider a linear block code denoted by $\mathcal{C}$ of blocklength $n$  and message length $k$. 
Let the code $\mathcal{C}$ be characterized by a \textit{regular} parity check matrix $\mathbf{H} \in \{0,1\}^{ (n-k) \times n}$, and we denote the Tanner graph as $\mathcal{G} = \left(\mathcal{V} ,\mathcal{P},\mathcal{E}\right)$; where $\mathcal{V} = \{v_1,\cdots,v_n\}$ is the set of variable nodes, $\mathcal{P} = \{p_1,\cdots,p_{n-k}\}$ is the set of parity check nodes, and $\mathcal{E} = \{e_1,\cdots,e_{nd_v}\}$ is the set of edges.
Here, $d_v$ represents the variable node degree, i.e., the number of parity checks a variable node participates in.
Let $\{v_i,p_j\}$ denote the edge in the Tanner graph $\mathcal{G}$ connecting variable node $v_i$ to parity check node $p_j$. 
 $\mathcal{V} \left( v_j \right) = \{p_i|\mathbf{H}[i,j] = 1\}$ denote the set of parity check nodes adjacent to the variable node $v_j$ in the Tanner graph $\mathcal{G}$. 
 Similarly,  $\mathcal{P} \left( p_i \right) = \{v_j|\mathbf{H}[i,j] = 1\}$ denote the set of variable nodes adjacent to the parity check node $p_i$ in $\mathcal{G}$.

 Let $\mathcal{Y} \subseteq \mathbb{R}^{n}$ be the space of $n$ dimensional channel outputs, $\mathcal{X} \subseteq \{0,1\}^{n} $ be the space of $n$ dimensional codewords, $\mathcal{U} \subseteq \{0,1\}^{k}$ be the space of $k$ dimensional messages, and $\mathcal{Z} \subseteq \mathbb{R}^{n}$ be the space of $n$ dimensional channel noise.
The message $\mathbf{u} = [\mathbf{u}[1], \cdots, \mathbf{u}[k]]^{\top}  \in \mathcal{U}$ is encoded to the codeword $\mathbf{x} = [\mathbf{x}[1], \cdots, \mathbf{x}[n]]^{\top} \in \mathcal{X}$.
The channel is assumed to be memoryless, described by $\Pr(\mathbf{y}|\mathbf{x}) = \prod_{i = 1}^n \Pr(\mathbf{y}[i]|\mathbf{x}[i])$. 
The receiver receives the channel output $\mathbf{y} = [\mathbf{y}[1], \cdots, \mathbf{y}[n]]^{\top} \in \mathcal{Y}$; 
which is the codeword $\mathbf{x}$ modulated, and corrupted with  independently and identically distributed (i.i.d.) additive noise $\mathbf{z} = [\mathbf{z}[1], \cdots, \mathbf{z}[n]]^{\top} \in \mathcal{Z}$. 

The goal of the decoder is to recover the message $\mathbf{u}$ from the channel output $\mathbf{y}$. 
The input to the decoder is the log-likelihood ratio (LLR) of the posterior probabilities denoted by $\bm{\lambda} \in \mathbb{R}^{n \times 1}$ and is given as $ \bm{\lambda}[i] = \log \frac{\Pr(\mathbf{x}[i]=0|\mathbf{y}[i])}{\Pr(\mathbf{x}[i]=1|\mathbf{y}[i])} $, for $1 \leq i \leq n$.
Denote the output of the NBP decoder with $T$ decoding iterations as $\mathbf{\hat{x}} = f(\bm{\lambda})$, where $f(\cdot)$ denotes the decoding function.

The architecture of the NBP decoder is derived from the trellis representation of $\mathcal{G}$ and illustrated in Fig. \ref{fig: neural-network-decoder}(b). 
Each decoding iteration $t$ (where, $1 \leq t \leq T$) corresponds to two hidden layers each of width $|\mathcal{E}| = n d_v$, namely: (1) variable layer $\mathbf{v_t}$, (2) parity check layer $\mathbf{p_t}$.  
The hidden nodes in layers $\mathbf{v_t}$ and $\mathbf{p_t}$ correspond to the messages passed along the edges of the Tanner graph $\mathcal{G}$.
For instance, the output of the node $\mathbf{v_t}[\{l,m\}]$ in the NBP decoder corresponds to the message passed from variable node $v_l$ to parity check node $p_m$ in the $t$-{th} iteration, and is given as,

\begin{align}
    \mathbf{v_t}[\{l,m\}] =  \mathbf{W}^{(t)}_1[\{l,m\},l]\bm{\lambda}[l] +   \underset{m^{\prime} \in \mathcal{V}(l)\backslash m}{\sum} \mathbf{W}^{(t)}_2[\{l,m\},\{l,m^{\prime}\}]\mathbf{p_{t-1}}[\{l,m^{\prime}\}],
\end{align}

where, $\mathbf{p_{t-1}}[\{l,m^{\prime}\}]$ corresponds to the message passed from the parity check node $p_{m^{\prime}}$ to the variable node $v_l$ in the $(t-1)$-{th} iteration.
For $t = 1$, we have $\mathbf{p_{0}} = \left[0,\cdots,0 \right]^{\top}$.
$\mathbf{W}^{(t)}_1 \in \mathbb{R}^{nd_v \times n}$, and $\mathbf{W}^{(t)}_2 \in \mathbb{R}^{nd_v \times nd_v}$ are sparse weight matrices trained using backpropagation in the $t$-{th} decoding iteration. 
$\mathbf{W}^{(t)}_1$ is strictly a lower triangular matrix with exactly $d_v$ non-zero entries in every column, and one non-zero entry in every row.
$\mathbf{W}^{(t)}_2$ has exactly $d_v-1$ non-zero entries in every row, and $d_v-1$ non-zero entries in every column. 
We consider that the $t$-{th} decoding iteration is characterized by weight matrices $\mathbf{W}^{(t)}_1$, and $\mathbf{W}^{(t)}_2$, where $t$ can take integer values $t \in \{1,\cdots, T\}$.
The output of the parity check node layer in the $t$-{th} decoding iteration for the NBP decoder is,

\begin{align}
\mathbf{p_t}[\{l,m\}]=2 \tanh ^{-1}\left(\prod_{l^{\prime} \in \mathcal{P}(m)\backslash l} \tanh \left( \frac{{\mathbf{v_t}[\{l^{\prime} ,m\}]}}{2} \right)\right)
\label{eq: belief-propagation-without-approximation}
\end{align}
Implementing \eqref{eq: belief-propagation-without-approximation} is computationally expensive in hardware due to the multiplicative operations and hyperbolic functions. 
For practical implementation, \eqref{eq: belief-propagation-without-approximation} can be made computationally feasible by using the min-sum operation, which is described as follows:
 \begin{align}
   \hspace{-0.25cm} \mathbf{p_t}[\{l,m\}] = \hspace{-0.4cm} \underset{l^{\prime} \in \mathcal{P}(m)\backslash l}{\prod} \hspace{-0.35cm} sign({\mathbf{v_t}[\{l^{\prime} ,m\}]}) \underset{l^{\prime} \in \mathcal{P}(m)\backslash l}{\min} |{\mathbf{v_t}[\{l^{\prime} ,m\}]}|.
    \label{eq: min-sum-approximation}
\end{align}
We note that learnable parameters can be incorporated into the min-sum operations.
Specifically, the output of the parity check node layer can be scaled with weights as follows:
 \begin{align}
   \hspace{-0.25cm} \mathbf{p_t}[\{l,m\}] =  \bm{\beta_t}[\{l,m\}] 
 \underset{l^{\prime} \in \mathcal{P}(m)\backslash l}{\prod} sign({\mathbf{v_t}[\{l^{\prime} ,m\}]}) \hspace{0.1cm} \mathbf{\tilde{p}_t}[\{l,m\}].
 \vspace{-0.25cm}
\end{align}
The parameter vector $\bm{\beta_t}$ is learned in a data-driven manner. Alternatively, the output of the parity check node layer can be offset using the parameter $\bm{\beta_t}$ as follows: 
 \begin{align}
   \hspace{-0.25cm} \mathbf{p_t}[\{l,m\}] = \hspace{-0.4cm} \underset{l^{\prime} \in \mathcal{P}(m)\backslash l}{\prod} sign({\mathbf{v_t}[\{l^{\prime} ,m\}]}) \hspace{0.1cm} ReLu\left(\mathbf{\tilde{p}_t}[\{l,m\}] - \bm{\beta_t}[\{l,m\}]  \right).
   \vspace{-0.25cm}
\end{align}

\noindent In this paper, we concentrate on the scenario where the learnable parameters are used solely for computing the output of the variable node layer.
The estimated codeword after $T$ decoding iterations in the NBP decoder is given as, 
\begin{align}
   \hspace{-0.3cm} \mathbf{\hat{x}}[l] \hspace{-0.05cm} = \hspace{-0.05cm} s( \mathbf{W}^{(T)}_4[l,l] \bm{\lambda}[l] + \hspace{-0.3cm} \underset{m^{\prime} \in \mathcal{V}(l)}{\sum} \hspace{-0.22cm} \mathbf{W}^{(T)}_3[l,\{l,m^{\prime}\}] \mathbf{p_T}[\{l,m^{\prime}\}])
\end{align}
where,  $\mathbf{W}_3 \in \mathbb{R}^{n \times nd_v}$, $\mathbf{W}_4 \in \mathbb{R}^{n \times n}$, and $s(\cdot)$ is the sigmoid activation. 
$\mathbf{W}_3$ is strictly an upper triangular matrix with exactly $d_v$ non-zero entries in every row, while $\mathbf{W}_4$ is a diagonal matrix. 

\noindent The NBP decoder (denoted by $f(\cdot)$) is characterized by the following four sparse weight matrices: (a) $\mathbf{W}^{(t)}_1$, where $t = 1, \cdots, T$, (b) $\mathbf{W}^{(t)}_2$, where $t =  1, \cdots, T$, (c) $\mathbf{W}_3$, and (d) $\mathbf{W}_4$.
The weight matrices are learnt by training the NBP decoder to minimize the bit error rate (BER) loss that is defined as, 
\begin{align}
    l_\text{BER}(f(\bm{\lambda}),\mathbf{x})=\frac{d_H(f(\bm{\lambda}),\mathbf{x})}{n} = \frac{\sum_{j = 1}^{n} \mathbbm{1}\left(f({\bm{\lambda}}){[j]} \neq \mathbf{x}[j] \right)}{n}.
\end{align}
Here, $d_H(\cdot,\cdot)$ denotes the Hamming distance, and $\mathbbm{1}(\cdot)$ denotes the indicator function.
In practice, we train the NBP decoder to minimize the BER loss over the  dataset  $S = \{(\bm{\lambda}_j,\mathbf{x}_j)\}_{j=1}^{m}$ comprising of pairs of log-likelihood ratio and its corresponding codeword.
Then, we define the empirical risk of $f$ as $\mathcal{\hat{R}}_{\text{BER}}(f) = \frac{1}{m}\sum\limits_{j = 1}^m  l_\text{BER}(f(\bm{\lambda}_j),\mathbf{x}_j)$.  The true risk of $f$ is defined as $\mathcal{R}_{\text{BER}}(f) = \mathop{\mathbb{E}}_{{\bm{\lambda}},{\mathbf{x}}}[l_\text{BER}(f(\bm{\lambda}),\mathbf{x})]$. 

\noindent \textbf{Problem Statement.}  
The generalization gap is defined as the difference $\mathcal{R}_{\text{BER}}(f)-\mathcal{\hat{R}}_{\text{BER}}(f) $. The main goal of this paper is to understand the behavior of the generalization gap (specifically upper bounds) as a function of a) training dataset size, $m$,  b) the \textit{complexity} of the NBP decoder, in terms of the number of decoding iterations $T$ and c) code parameters, such as message length $k$, blocklength $n$, variable node degree $d_v$, parity check node degree $d_c$.

\section{Main Results}
\label{sec: main-results}
\noindent In this section, we present our main results on the generalization gap for NBP decoders.
Let $S = \{(\bm{\lambda}_j,\mathbf{x}_j)\}_{j=1}^{m}$ be the training dataset, and we assume that the dataset is i.i.d from a fixed distribution.
Let $\mathcal{F}_T$ be a class of NBP decoders with $T$ decoding iterations. For the scope of this paper, we focus on the family of NBP decoders whose non-zero weight entries are bounded by a constant $w$. Specifically, we assume that for every $(i,j)$ and $1\leq t \leq T$, $|\mathbf{W}^{(t)}_1[i,j]|\leq w$, $|\mathbf{W}^{(t)}_2[i,j]|\leq w$, $|\mathbf{W}_3[i,j]|\leq w$ and $|\mathbf{W}_4[i,j]|\leq w$, i.e., the maximum absolute value of the $(i,j)$ coordinates for all the weight matrices are bounded by a non-negative constant $w$. In addition, we also assume that input log-likelihood ratio $|\bm{\lambda}[i]| \leq b_{\lambda}$ for all $i=1,\ldots, n$.  

We define the hypothesis class $\mathcal{F}_{L,T}$, derived from the class $\mathcal{F}_T$ of NBP decoders as follows:  
\begin{align}
    \mathcal{F}_{L,T} = \left\{({\bm{\lambda}},\mathbf{x})\mapsto l_\text{BER}(f({\bm{\lambda}}),\mathbf{x}):f\in \mathcal{F}_T\right\}.
\label{eq: definition set FLT}    
\end{align}
Intuitively, for each $f \in \mathcal{F}_T$, the output of the corresponding function in $\mathcal{F}_{L,T}$ is the BER loss of the decoder $f$. We next define the empirical Rademacher complexity of $\mathcal{F}_{L,T}$.
\begin{definition} \textit{(Rademacher complexity of $\mathcal{F}_{L,T}$)} 
The empirical Rademacher complexity of $\mathcal{F}_{L,T}$ is defined as
\begin{align}
    R_{m} (\mathcal{F}_{L,T}) \triangleq \underset{\sigma}{\mathbb{E}} \left[\underset{_{f \in \mathcal{F}_{T}}}{\sup} \frac{1}{m}\sum_{i = 1}^{m}\sigma_{i} l_\text{BER}(f({\bm{\lambda}_i}),\mathbf{x}_i)\right], 
\end{align}
where $\sigma_{i}$'s are i.i.d. Rademacher random variables, i.e., $\Pr(\sigma_i = 1) = \Pr(\sigma_i = -1) = \frac{1}{2}$. 
\label{def-1:empirical-Rademacher-complexity}
\end{definition}
 We note that the loss function $l_{\text{BER}}$ takes the values between $[0,1]$; and consequently using a standard result from PAC learning literature (Theorem 3.3 in \cite{mohri2018foundations}), one can bound the generalization gap in terms of $R_{m} (\mathcal{F}_{L,T})$. 
  Specifically, for any $\delta \in (0,1)$, with probability at least $1-\delta$, the generalization gap for any $f \in F_{T}$ is bounded as follows:
\begin{align}
    \mathcal{R}_\text{BER}(f)
    - \mathcal{\hat{R}}_{\text{BER}}(f) \leq  2 R_m(\mathcal{F}_{L,T})+\sqrt{\frac{\log(1/\delta)}{2m}}.
    \label{eq:Thm-1(a)}
\end{align}

To proceed further, we introduce bit-wise Rademacher complexity of $\mathcal{F}_T$; which is a new notion and captures the correlation between  $j$-th channel output of the NBP decoder and a random decision (Rademacher random variable). 
 
 \begin{definition} \textit{(Bit-wise Rademacher complexity of $\mathcal{F}_T$)} For a NBP decoder class $\mathcal{F}_T$, the empirical bit-wise Rademacher complexity corresponding to its $j$-th output bit is defined as:
\begin{align}
    R_m(\mathcal{F}_{T}[j]) \triangleq \underset{\sigma}{\mathbb{E}} \left[\underset{{f \in \mathcal{F}_{T}}}{\sup}\frac{1}{m}\sum_{i=1}^m\sigma_i \cdot f({\bm{\lambda}}_i){[j]}\right].
\end{align}
\label{def-2:bit-wise-Rademacher-complexity}
\end{definition}

\noindent We next present Proposition \ref{Proposition-1: bit-wise-Rademacher} in which we upper bound the generalization gap as a function of the empirical bit-wise Rademacher complexity $ R_m(\mathcal{F}_{T}[j])$.

\begin{proposition}
For any $\delta \in (0,1)$, with probability at least $1-\delta$, the generalization gap for any NBP decoder $f\in \mathcal{F}_T$ can be upper bounded as follows,
\begin{align}
   \hspace{-0.3cm} \mathcal{R}_\text{BER}(f)
    - \mathcal{\hat{R}}_{\text{BER}}(f) \leq \frac{1}{n} \sum_{j = 1}^{n} R_m(\mathcal{F}_{T}[j]) +\sqrt{\frac{\log(1/\delta)}{2m}},
    \label{eq:prop-1}
\end{align}
where $R_m(\mathcal{F}_{T}[j])$ denotes the bit-wise Rademacher complexity for the $j$th output bit. 
\label{Proposition-1: bit-wise-Rademacher}
\end{proposition}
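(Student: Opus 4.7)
\textbf{Proof Plan for Proposition \ref{Proposition-1: bit-wise-Rademacher}.}

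The plan is to start from the generic Rademacher bound already stated in \eqref{eq:Thm-1(a)} and to show that the Rademacher complexity of the composite loss class $\mathcal{F}_{L,T}$ can be decomposed and controlled by the sum of bit-wise Rademacher complexities $R_m(\mathcal{F}_T[j])$. Concretely, the goal is to establish an inequality of the form
\begin{align}
R_m(\mathcal{F}_{L,T}) \;\le\; \frac{1}{n}\sum_{j=1}^{n} R_m(\mathcal{F}_T[j]),
\end{align}
after which the claimed bound follows immediately by substitution into \eqref{eq:Thm-1(a)} (absorbing the constant factor into the definition of the bit-wise complexity, as the paper's coefficient suggests).

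First, I would substitute the definition of $l_\text{BER}$ into the definition of $R_m(\mathcal{F}_{L,T})$ to obtain
\begin{align}
R_m(\mathcal{F}_{L,T}) = \mathbb{E}_\sigma\!\left[\sup_{f\in\mathcal{F}_T}\frac{1}{mn}\sum_{j=1}^{n}\sum_{i=1}^{m}\sigma_i\,\mathbbm{1}\!\bigl(f(\bm{\lambda}_i)[j]\neq \mathbf{x}_i[j]\bigr)\right].
\end{align}
The next step is to rewrite the indicator in a form that is linear in the decoder output. Treating $f(\bm{\lambda}_i)[j]$ and $\mathbf{x}_i[j]$ as $\{0,1\}$-valued, I would use the identity
\begin{align}
\mathbbm{1}(a\neq b) \;=\; a + b - 2ab \;=\; b + a(1-2b),
\end{align}
so that the indicator splits into a term $\mathbf{x}_i[j]$ independent of $f$ and a term $f(\bm{\lambda}_i)[j]\bigl(1-2\mathbf{x}_i[j]\bigr)$ that carries the $f$-dependence. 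The $f$-independent piece contributes zero after taking the expectation over $\sigma$ (since $\mathbb{E}[\sigma_i]=0$), and the supremum is unaffected by it.

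Having reduced the expression to
\begin{align}
R_m(\mathcal{F}_{L,T}) \;=\; \mathbb{E}_\sigma\!\left[\sup_{f}\frac{1}{mn}\sum_{j=1}^{n}\sum_{i=1}^{m}\sigma_i\bigl(1-2\mathbf{x}_i[j]\bigr)\,f(\bm{\lambda}_i)[j]\right],
\end{align}
I would exploit the symmetry of the Rademacher distribution: since $(1-2\mathbf{x}_i[j])\in\{-1,+1\}$, for each fixed $j$ the random variables $\tilde{\sigma}_i^{(j)}\triangleq \sigma_i(1-2\mathbf{x}_i[j])$ are again i.i.d.\ Rademacher, so the expectation is invariant under this substitution. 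Applying subadditivity of the supremum to split over $j$ then gives
\begin{align}
R_m(\mathcal{F}_{L,T}) \;\le\; \frac{1}{n}\sum_{j=1}^{n}\mathbb{E}_{\tilde\sigma}\!\left[\sup_{f\in\mathcal{F}_T}\frac{1}{m}\sum_{i=1}^{m}\tilde\sigma_i\,f(\bm{\lambda}_i)[j]\right] \;=\; \frac{1}{n}\sum_{j=1}^{n} R_m(\mathcal{F}_T[j]).
\end{align}
Combining this with \eqref{eq:Thm-1(a)} delivers the statement of the proposition.

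The main obstacle I anticipate is the step that converts the indicator $\mathbbm{1}(f(\bm{\lambda}_i)[j]\neq \mathbf{x}_i[j])$ into a quantity that is linear in $f(\bm{\lambda}_i)[j]$. The decoder output is produced by a sigmoid and therefore lies in $[0,1]$ rather than $\{0,1\}$; a hard decision rule must be made explicit (or a thresholded version of the decoder class introduced) before the algebraic identity $\mathbbm{1}(a\neq b) = a + b - 2ab$ can be invoked. Once that binary interpretation is fixed, the remainder is routine Rademacher manipulation: decomposition, zero-mean cancellation, symmetry-based sign absorption, and subadditivity of the supremum.
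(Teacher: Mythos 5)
Your proposal is correct and follows essentially the same route as the paper's own proof in Appendix \ref{sec: Proposition-1-Proof}: linearize the per-bit indicator loss, drop the $f$-independent term (zero-mean under $\sigma$), absorb the fixed signs depending on $\mathbf{x}_i[j]$ into the Rademacher variables by symmetry, and split the supremum over the $n$ bit positions via subadditivity before invoking the standard bound \eqref{eq:Thm-1(a)}. The only difference is an encoding convention — you work with $\{0,1\}$-valued hard decisions via $\mathbbm{1}(a\neq b)=b+a(1-2b)$ whereas the paper maps outputs to $\{-1,+1\}$ and uses $\mathbbm{1}(a\neq b)=\tfrac{1}{2}\left(1-ab\right)$, which yields $R_m(\mathcal{F}_{L,T})\leq \frac{1}{2n}\sum_{j=1}^{n}R_m(\mathcal{F}_T[j])$ so that the factor $2$ in \eqref{eq:Thm-1(a)} cancels exactly; this confirms your anticipated factor-of-two bookkeeping (and your remark that a hard-decision rule must be fixed before the identity applies is precisely what the paper does implicitly in its step (a)).
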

The proof of Proposition \ref{Proposition-1: bit-wise-Rademacher} is presented in Appendix \ref{sec: Proposition-1-Proof}. We now present Theorem \ref{Theorem-1: dudley-entropy-intergral} which is the main result of this paper. The main technical challenge is to bound the bit-wise Rademacher complexity  $R_m(\mathcal{F}_{T}[j])$ as a function of  the number of decoding iterations $T$, training dataset size $m$ and code parameters (blocklength $n$ and variable node degree $d_v$).
\begin{theorem}
For any $\delta \in (0,1)$, with probability at least $1-\delta$, the generalization gap for any NBP decoder $f\in \mathcal{F}_T$ can be upper bounded as follows,
\begin{align}
    &\mathcal{R}_\text{BER}(f) - \mathcal{\hat{R}}_{\text{BER}}(f)
    \leq   \frac{4}{{m}} +\sqrt{\frac{\log(1/\delta)}{2m}} +  {12} \sqrt{\frac{(n d^2_v T + 1)({T+1})}{m} \log \left(8{\sqrt{mn}wd_vb_{\lambda}}\right)},
\end{align}
where $n$ denotes the blocklength, $d_v$ is the variable node degree, $T$ is the number of decoding iterations (number of layers in NBP), $m$ is the training dataset size;  $w$ and $b_{\lambda}$ are upper bounds on the weights in the NBP decoder and input log-likelihood ratio, respectively.
\label{Theorem-1: dudley-entropy-intergral}
\end{theorem}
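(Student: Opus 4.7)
The plan is to reduce Theorem~\ref{Theorem-1: dudley-entropy-intergral} to a covering-number calculation, by combining Dudley's entropy integral with Proposition~\ref{Proposition-1: bit-wise-Rademacher}. Concretely, for each output coordinate $j$, Dudley's chaining inequality gives
$$R_m(\mathcal{F}_T[j]) \leq \inf_{\alpha > 0}\Big\{4\alpha \;+\; \tfrac{12}{\sqrt{m}}\int_\alpha^{D}\sqrt{\log \mathcal{N}(\mathcal{F}_T[j],\epsilon,\|\cdot\|_2)}\,d\epsilon\Big\},$$
where the covering is with respect to the empirical $L_2$ norm and $D$ is the empirical $L_2$ diameter of $\mathcal{F}_T[j]$ (finite because the sigmoid confines $f(\bm{\lambda})[j]$ to $[0,1]$). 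The entire calculation then hinges on a tight bound on the covering number $\mathcal{N}(\mathcal{F}_T[j],\epsilon,\|\cdot\|_2)$.

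To bound the covering number I would build it out of a parameter-space net. The NBP architecture has at most $nd_v$ nonzero entries in each $\mathbf{W}^{(t)}_1$, at most $nd_v(d_v-1)$ in each $\mathbf{W}^{(t)}_2$, $nd_v$ in $\mathbf{W}_3$, and $n$ in $\mathbf{W}_4$, so the total number of free parameters is of order $nd_v^2 T + nd_v + n$; every nonzero entry is bounded in absolute value by $w$. An $\eta$-net of the parameter space in $\ell_\infty$ therefore has cardinality at most $(2w/\eta)^{nd_v^2 T + O(nd_v)}$. The key lemma I need is a Lipschitz-in-weights estimate: replacing the weight tensor $W$ by $W'$ with $\|W-W'\|_\infty \leq \eta$ changes $f(\bm{\lambda})[j]$ by at most $L\eta$ uniformly over the training sample, where $L$ depends polynomially on $w$, $d_v$, $b_\lambda$ and the depth. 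Choosing $\eta \sim \epsilon/(L\sqrt{m})$ will then promote the parameter net to an $\epsilon$-cover of $\mathcal{F}_T[j]$ and yield
$$\log \mathcal{N}(\mathcal{F}_T[j],\epsilon,\|\cdot\|_2) \;\lesssim\; (nd_v^2 T + 1)(T+1)\log\!\Big(\tfrac{8\sqrt{mn}\,w d_v b_\lambda}{\epsilon}\Big),$$
with the extra factor $T+1$ absorbed from the chain-rule propagation through the $T$ variable/parity layers plus the final output layer.

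The hard part is precisely this Lipschitz argument. A naive layer-by-layer bookkeeping that multiplies Lipschitz constants across all $2T$ message-passing half-iterations tends to produce a bound exponential in $T$, which would ruin the theorem. The plan is to keep the dependence polynomial by exploiting two facts: (i) the parity-check map in \eqref{eq: belief-propagation-without-approximation} is built from $\tanh$, which is $1$-Lipschitz and maps into $(-1,1)$, so under the a~priori bound $b$ on the variable messages the outer $\tanh^{-1}$ stays away from its singularities and has bounded derivative; and (ii) the sparse column/row structure of $\mathbf{W}^{(t)}_1,\mathbf{W}^{(t)}_2$ (with at most $d_v$ or $d_v-1$ nonzeros) keeps the per-node fan-in to $O(d_v)$, so each weight perturbation enters only $O(d_v)$ places per iteration. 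Combined, these should give an $L$ whose logarithm is absorbed into the single $\log(8\sqrt{mn}\,wd_v b_\lambda)$ factor appearing in the theorem, and a $T$ dependence that enters only through the covering-number exponent.

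Finally, I would plug the covering estimate into Dudley's inequality with $\alpha = 1/m$ and $D \leq \sqrt{m}$, estimate the integral by pulling the slowly varying logarithm outside (using $\int_\alpha^D \sqrt{P\log(C/\epsilon)}\,d\epsilon \leq D\sqrt{P\log(C/\alpha)}$), and obtain
$$R_m(\mathcal{F}_T[j]) \;\leq\; \tfrac{4}{m} \;+\; 12\sqrt{\tfrac{(nd_v^2 T+1)(T+1)}{m}\,\log\!\big(8\sqrt{mn}\, w d_v b_\lambda\big)}.$$
Because this bound is independent of $j$, averaging over $j$ as in Proposition~\ref{Proposition-1: bit-wise-Rademacher} (which contributes the $\sqrt{\log(1/\delta)/(2m)}$ confidence term) immediately yields the theorem.
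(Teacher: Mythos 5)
Your overall skeleton is the same as the paper's: Proposition \ref{Proposition-1: bit-wise-Rademacher}, then a Dudley entropy-integral bound on $R_m(\mathcal{F}_T[j])$ (the paper adapts Lemma A.5 of Bartlett et al.\ in exactly the form you quote), then a covering number for $\mathcal{F}_T[j]$ assembled from parameter-space nets for the sparse weight matrices (the paper's Lemma \ref{Lemma-2: Covering number matrices} plays the role of your $\ell_\infty$ net, with the same parameter count $nd_v^2T + O(nd_v)$ in the exponent), then a choice of $\alpha$ producing the $4/m$ term. The genuine gap is in your key lemma. You plan to prove a weight-perturbation Lipschitz constant $L$ that is ``polynomial in $w$, $d_v$, $b_\lambda$ and the depth,'' explicitly rejecting the layer-by-layer bookkeeping as exponential in $T$. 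This is neither achievable nor needed, and it contradicts the covering bound you then write down: with weights bounded by $w$ and fan-in $d_v$ there is no contraction mechanism, so a perturbation in an early layer is generically amplified by a factor of order $\sqrt{n}\,wd_v$ per iteration. The paper's Lemma \ref{lemma-1: f is lipschitz} embraces this — its constants are exponential in $T$, e.g.\ $\rho_{_{W^{(i)}_1}} = n b_\lambda B_{w_3}\bigl(\sqrt{n}B_{W_2}\bigr)^{T-i}$ — and the theorem survives only because $L$ enters inside the logarithm of the covering number, where $\log\bigl((\sqrt{n}wd_v)^{T+1}\bigr) = (T+1)\log(\sqrt{n}wd_v)$. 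That is precisely the origin of the $(T+1)$ factor multiplying $(nd_v^2T+1)$. If $L$ really were polynomial in $T$, that factor would be $O(\log T)$ and you would be asserting a strictly stronger theorem than the one stated; since your displayed covering estimate already carries the $(T+1)$, your narrative and your formula are inconsistent, and the proof as planned stalls exactly at the ``hard part'' you identified.

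Your mechanism (i) for the parity layer is also unsound as stated. The derivative of the outer $2\tanh^{-1}$ is of order $\cosh^2(b/2)$ once the incoming messages reach magnitude $b$, and the message magnitudes grow geometrically across iterations (the paper's recursion $\|\mathbf{v_t}\|_2 \leq \sqrt{n}B_{W_1}b_\lambda + B_{W_2}\|\mathbf{p_{t-1}}\|_2$), so ``bounded derivative away from the singularities'' fails uniformly and would produce constants far worse than exponential in $T$, not absorbable into a single $\log$ factor. The paper sidesteps this by analyzing the min-sum parity update and proving Lemma \ref{lemma 5: p_t leq v_t}, namely $\|\mathbf{p_t}\|_2 \leq \|\mathbf{v_t}\|_2$, together with the difference bound $\|\mathbf{p_{T}} - \mathbf{p_{T}^{\prime}}\|_2 \leq \sqrt{n}\,\|\mathbf{v_{T}} - \mathbf{v_{T}^{\prime}}\|_2$; this norm-control device is the missing ingredient your perturbation-propagation step needs, and it is also where the $\sqrt{n}$ inside the geometric factor $\bigl(\sqrt{n}B_{W_2}\bigr)^{T-i}$ originates.
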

\noindent \textbf{Proof-sketch of Theorem \ref{Theorem-1: dudley-entropy-intergral}:} The detailed  proof of Theorem \ref{Theorem-1: dudley-entropy-intergral} is presented in Appendix \ref{sec: Theorem-1-Proof} and here we briefly describe the main ideas. 
We first upper bound the bit-wise Rademacher complexity in terms of Dudley entropy integral (specifically, leveraging Massart's Lemma in \cite{bartlett2017spectrally} and adapting it to our problem). The resulting bound is expressed in terms of the covering number of the NBP decoder class, i.e., the \textit{smallest} cardinality of the set of functions in $\mathcal{F}_{T}$ that can closely approximate the NBP decoding function $f$.  To further bound the covering number, we first show that the NBP decoder is Lipschitz in its weight matrices which is proved in Lemma \ref{lemma-1: f is lipschitz} (see Appendix \ref{sec: Theorem-1-Proof}). In other words, for a given input, the output of the NBP decoder remains invariant to small perturbations in its weight matrices. Using this fact, we obtain a bound on the covering number of the NBP decoder class in terms of a product of covering numbers (each corresponding to a weight matrix). We then observe that the weight matrices for the NBP decoder are sparse, where the structure and number of non-zero entries is determined by the parity check matrix and the code parameters (such as blocklength $n$, variable node degree $d_v$ etc.).  We then use the fact that the covering number of a sparse weight matrix is always smaller than that of a non-sparse vector (of the same size as the total non-zero entries in the original sparse matrix). Using our result in Lemma \ref{Lemma-2: Covering number matrices}, we can finally upper bound the bit-wise Rademacher complexity as a function of the code parameters to deduce the result in Theorem \ref{Theorem-1: dudley-entropy-intergral}.

\begin{remark} [\textbf{Representation in Terms of Code-rate and Parity Check Node Degree}] 
The result in Theorem \ref{Theorem-1: dudley-entropy-intergral} can also be expressed as follows:
\begin{align}
  \hspace{-0.2cm} \mathcal{R}_\text{BER}(f) - \mathcal{\hat{R}}_{\text{BER}}(f)
    \leq  \frac{4}{{m}}  \hspace{-2pt} + \hspace{-2pt} \sqrt{\frac{\log(1/\delta)}{2m}} \hspace{-2pt} + \hspace{-2pt}
     {12} \sqrt{\frac{(n d^2_c (1 - \kappa)^2 T + 1)({T+1})}{m} \log \left(8{\sqrt{mn}wd_vb_{\lambda}}\right)}. 
    \label{dudley-entropy-intergral-b}
\end{align}
 We use the fact that the blocklength, message length, variable node degree, and parity check node degree are related as $nd_v = (n-k)d_c$.
 Using this relation in Theorem \ref{Theorem-1: dudley-entropy-intergral} we obtain \eqref{dudley-entropy-intergral-b}.
 From the result in \eqref{dudley-entropy-intergral-b} we note that the generalization gap reduces for codes with a high code-rate $\kappa$.
\end{remark}

\begin{figure}[!t]
	\begin{center}
		\includegraphics[scale = 0.25]{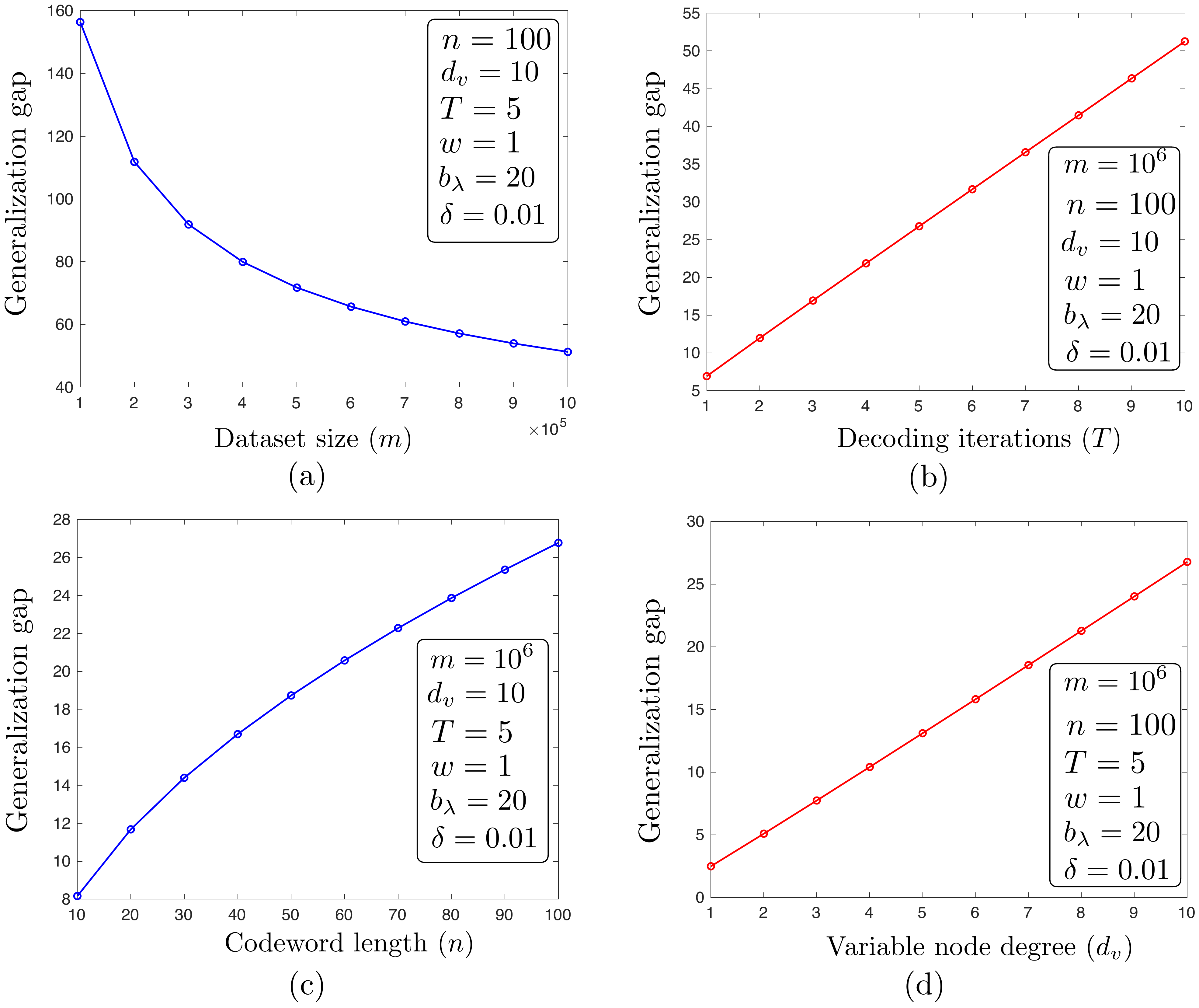}
	\caption{(a) RHS in Theorem \ref{Theorem-1: dudley-entropy-intergral} vs Dataset size ($m$), (b) RHS in Theorem \ref{Theorem-1: dudley-entropy-intergral} vs Decoding iterations ($T$), (c) RHS in Theorem \ref{Theorem-1: dudley-entropy-intergral} vs Blocklength ($n$), (d) RHS in Theorem \ref{Theorem-1: dudley-entropy-intergral} vs Variable node degree ($d_v$). \label{fig: gen_vs_parameters}}
		\end{center}
  \vspace{-20pt}
\end{figure}

\begin{remark} [\textbf{Impact of the Code-parameters}]  
 We plot the generalization gap bound obtained in Theorem \ref{Theorem-1: dudley-entropy-intergral} in  Fig. \ref{fig: gen_vs_parameters} for blocklength $n = 100$, variable node degree  $d_v = 10$, decoding iterations $T = 10$, and  dataset size $m = 10^6$.
 To understand the dependence of the generalization gap on a parameter, we vary that parameter while keeping the values of the remaining parameters fixed. 
Smaller training dataset size results in overfitting, and therefore corresponds to a larger generalization gap. 
 We observe this in Fig.  \ref{fig: gen_vs_parameters}(a), wherein the generalization gap decays as $\mathcal{O}(\frac{1}{\sqrt{m}})$. 
 While more decoding iterations (i.e.,  more hidden layers) are expected to improve decoding performance, it can also overfit the training data. 
 Therefore, we expect the generalization gap to increase with the number of decoding iterations. 
As seen from  Fig. \ref{fig: gen_vs_parameters}(b), we note that the generalization gap of the NBP decoder scales linearly as   $\mathcal{O}(T)$.
Our theoretical result in Theorem \ref{Theorem-1: dudley-entropy-intergral} tells us that the generalization gap scales with the blocklength as $\mathcal{O}(\sqrt{n})$ as shown in Fig. \ref{fig: gen_vs_parameters}(c). 
However, the generalization gap scales linearly with the variable node degree as $\mathcal{O}(d_v)$ as shown in Fig. \ref{fig: gen_vs_parameters}(d).

\end{remark}


\begin{remark} [\textbf{Comparison with Other Approaches for Bounding the Generalization Gap}] 
Vapnik-Chervonenkis (VC) dimension bounds \cite{bartlett2019nearly, sontag1998vc}, PAC-Bayes analysis \cite{dziugaite2017computing,neyshabur2017pac,liao2020pac} are other techniques to upper bound the generalization gap. 
While VC-dimension approach yields a bound independent of the data distribution, it is found that these bounds are vacuous \cite{dziugaite2017computing, dziugaite2020search} and scales exponentially with the number of parameters of the neural network. 
To obtain tighter and non vacuous generalization bounds, prior works \cite{biggs2022non, alquier2021user, dziugaite2017computing} have proposed the use of PAC-Bayes analysis.
For any $\delta \in (0,1)$, with probability at least $1-\delta$, the generalization gap using PAC-Bayes analysis is upper bounded as,
$\mathcal{R}_\text{BER}(f)
- \mathcal{\hat{R}}_{\text{BER}}(f) \leq  \sqrt{\frac{\textrm{KL}(\zeta||\Gamma)+ \log{\sqrt{m}}+ \log\left(2/\delta\right)}{2m} }.
$
The PAC-Bayes prior on the space of neural network decoders $\zeta$ is chosen independent of the training data \cite{biggs2022non, dziugaite2018data}.
The KL divergence term between the PAC-Bayes prior $\zeta$ and  posterior $\Gamma$ is typically the dominant term in the bound for the generalization gap. 
While the posterior  $\Gamma$ achieves minimal empirical risk, and is data-dependent; the KL divergence term can be large as the  data-independent priors are chosen arbitrarily causing the bound to be vacuous \cite{dziugaite2018data}.
In \cite{neyshabur2017pac}, the PAC-Bayes framework is utilized to establish an upper bound on the generalization gap as a function of the network's sharpness, where sharpness is defined as the change in network output relative to the perturbation of the weight matrices. 
The resulting bound is a function of the spectral and Frobenius norms of the weight matrices, assuming that both the perturbation and the prior are from a Gaussian distribution. 
Exploring PAC-Bayes analysis for the NBP (Neural Belief Propagation) decoder paradigm, and understanding how the bound scales with different priors, is an interesting future direction.
PAC-Learning approach used in this paper leads to a cleaner analysis (inspired by recent results on generalization bounds for graph neural networks and recurrent neural networks \cite{garg2020generalization, chen2019generalization}), and the bound obtained has a closed-form expression with explicit dependence on code parameters, decoding iterations, and the training dataset size.
\end{remark}

\noindent We next show that Theorem \ref{Theorem-1: dudley-entropy-intergral} can be readily generalized to irregular parity check matrices.  Specifically, consider an irregular parity check matrix $\mathbf{H} \in \{0,1\}^{(n-k) \times n}$  where $d_{v_i}$ is the variable node degree of the $i$-{th} bit in the codeword, and $d_{c_j}$ is the parity check node degree of the $j$-{th} parity check equation.
The NBP decoder corresponding to such this irregular parity check matrix is characterized by the weight matrices $\{\mathbf{W}^{(t)}_1| 1 \leq t \leq T\}$, $\{\mathbf{W}^{(t)}_2| 1 \leq t \leq T\}$, $\mathbf{W}_3$, $\mathbf{W}_4$. 
 Here, for every $1 \leq t \leq T$, and $\theta = \sum\limits_{i = 1}^{n} d_{v_i}$, we have that $\mathbf{W}^{(t)}_1 \in \mathbb{R}^{ \theta \times n}$, $\mathbf{W}^{(t)}_2 \in \mathbb{R}^{ \theta \times \theta}$, $\mathbf{W}_3 \in \mathbb{R}^{ n \times \theta}$, and 
$\mathbf{W}_4 \in \mathbb{R}^{ n \times n}$.
For any value of $t$, the weight matrix $\mathbf{W}^{(t)}_1$ has one non-zero entry in every row, and $d_{v_i}$ non-zero entries in the $i$-th column. 
In the weight matrix $\mathbf{W}^{(t)}_2$, the $i$-th bit in the codeword with variable node degree $d_{v_i}$ corresponds to $d_{v_i}$ rows and $d_{v_i}$ columns, and these rows and columns each have exactly $d_{v_{i}}-1$ non-zero entries.

\begin{theorem}
For any $\delta \in (0,1)$, with probability at least $1-\delta$, the generalization gap for any NBP decoder $f\in \mathcal{F}_T$ corresponding to irregular parity check matrix can be upper bounded as follows, 
\begin{align}
    \mathcal{R}_\text{BER}(f) -  \mathcal{\hat{R}}_{\text{BER}}(f)
    \leq \frac{4}{{m}} +\sqrt{\frac{\log(1/\delta)}{2m}} + {12} \sqrt{\frac{\sum\limits_{j = 1}^{n} d^2_{v_j}({T+1})^2}{m} \log \left(8{\sqrt{mn}w\max_{i} d_{v_i} b_{\lambda}}\right)}.
\end{align}
\label{Corollary-1: irregular-parity-check-mat}
\end{theorem}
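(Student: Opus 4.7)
The plan is to mirror the three-step chain used for Theorem~\ref{Theorem-1: dudley-entropy-intergral}, replacing the uniform variable-node degree $d_v$ by the heterogeneous degrees $\{d_{v_i}\}_{i=1}^n$ wherever they arise. Proposition~\ref{Proposition-1: bit-wise-Rademacher} carries over verbatim, so the generalization gap is again bounded by the average bit-wise Rademacher complexity $\frac{1}{n}\sum_j R_m(\mathcal{F}_T[j])$ plus $\sqrt{\log(1/\delta)/(2m)}$, and the remaining task is to bound $R_m(\mathcal{F}_T[j])$ via Dudley's entropy integral in terms of the covering number of the NBP class.

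The first step that requires genuine reworking is the Lipschitz lemma (analogue of Lemma~\ref{lemma-1: f is lipschitz}) for the irregular architecture. When perturbing $\mathbf{W}_2^{(t)}$, the sum defining $\mathbf{v}_t[\{l,m\}]$ ranges over at most $d_{v_l}-1$ incoming parity-check messages, and the $\tanh^{-1}(\prod \tanh(\cdot/2))$ update in the parity-check layer introduces a sensitivity controlled by the largest fan-in, namely $\max_i d_{v_i}$. Propagating this local estimate across $T$ iterations, the overall Lipschitz constant of the decoder in its weight matrices is a polynomial in $w$, $b_\lambda$, and $\max_i d_{v_i}$; this quantity will be absorbed inside the logarithm of the covering number, explaining the $\max_i d_{v_i}$ factor in the log term of the theorem.

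For the covering number itself, I would factorize it as a product over $\mathbf{W}_1^{(t)}, \mathbf{W}_2^{(t)}$ (for $t=1,\dots,T$), $\mathbf{W}_3$ and $\mathbf{W}_4$, and then invoke the sparse-matrix covering bound of Lemma~\ref{Lemma-2: Covering number matrices}. The key count is the total number of non-zero entries: $\mathbf{W}_1^{(t)}$ and $\mathbf{W}_3$ each contribute $\theta = \sum_i d_{v_i}$ non-zero entries, $\mathbf{W}_4$ contributes $n$, and critically each $\mathbf{W}_2^{(t)}$ contributes $\sum_i d_{v_i}(d_{v_i}-1) = O\!\left(\sum_i d_{v_i}^2\right)$ non-zero entries because the block of rows/columns indexed by the $i$-th variable node has exactly $d_{v_i}-1$ non-zero entries per row and per column. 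Summing across $T$ iterations, the dominant sparsity term is $T\sum_i d_{v_i}^2$, which after substitution into the Dudley integral yields the $\sqrt{\sum_j d_{v_j}^2 (T+1)^2/m}$ factor in the stated bound. In the regular case $d_{v_i}=d_v$ this reduces to $nd_v^2(T+1)^2$, matching the leading-order behavior of Theorem~\ref{Theorem-1: dudley-entropy-intergral}.

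The main obstacle I anticipate is the bookkeeping in the Lipschitz recursion: one must verify that the $\max_i d_{v_i}$ dependence enters only inside the logarithm, as the theorem statement asserts, rather than multiplicatively outside the square root. This requires tracking the tanh and $\tanh^{-1}$ Lipschitz constants across layers so that each iteration contributes at most an additive $\log(\max_i d_{v_i})$ after passing to the logarithm of the covering number, together with bounding $\sum_i d_{v_i} \le n \max_i d_{v_i}$ in the boundary terms of the Dudley integral so that the final closed form matches the claimed expression.
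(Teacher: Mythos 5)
Your proposal follows essentially the same route as the paper's proof: Proposition~\ref{Proposition-1: bit-wise-Rademacher} plus the Dudley integral, a per-matrix factorization of the covering number, Lemma~\ref{Lemma-2: Covering number matrices} applied with exactly the non-zero counts you give ($\theta=\sum_i d_{v_i}$ for each $\mathbf{W}_1^{(t)}$ and $\mathbf{W}_3$, $\sum_i d_{v_i}(d_{v_i}-1)$ for each $\mathbf{W}_2^{(t)}$), yielding the exponent $(T+1)\sum_h d_{v_h}^2$, with the spectral-norm bounds $B_{W_1}\le w\sqrt{\max_i d_{v_i}}$, $B_{W_2}\le w(\max_i d_{v_i}-1)$ placing $\max_i d_{v_i}$ inside the logarithm just as you anticipated. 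The only cosmetic difference is that the paper's Lipschitz recursion handles the parity-check layer via the min-sum bound $\|\mathbf{p_t}\|_2\le\|\mathbf{v_t}\|_2$ (Lemma~\ref{lemma 5: p_t leq v_t}) rather than tracking $\tanh$/$\tanh^{-1}$ Lipschitz constants as you sketch, but this does not change the argument.
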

\noindent The proof of Theorem \ref{Corollary-1: irregular-parity-check-mat} follows similar steps used to prove Theorem \ref{Theorem-1: dudley-entropy-intergral}, and is presented in Appendix \ref{sec: Corollary-1-Proof}.

\noindent In Theorem \ref{Theorem-1: dudley-entropy-intergral}, we assumed that the log-likelihood ratios are bounded and this result does not take  the channel SNR into account.
We study the impact of the bound on input log-likelihood ratios and the channel SNR in Theorem \ref{Proposition-2: unbounded-input-channel-snr} which is presented next.

\begin{theorem}
\label{Proposition-2: unbounded-input-channel-snr}
For any $\delta \in (0,1)$, with probability at least $1-\delta$, the generalization gap for any NBP decoder $f\in \mathcal{F}_T$ with unbounded log-likelihood ratios is upper bounded as follows,
 \begin{align}
   \mathcal{R}_\text{BER}(f) - \mathcal{\hat{R}}_{\text{BER}}(f)
    \leq  \min_{b_{\lambda}} \phi(n,d_v,T,m,w,b_{\lambda})
     + \frac{4}{{m}} +\sqrt{\frac{\log(1/\delta)}{2m}}.
     \label{eq: unbounded-channel-snr-prop2}
 \end{align}
 where, $\phi(n,d_v,T,m,w,b_{\lambda}) = {12} \sqrt{\frac{(n d^2_v T + 1)({T+1})}{m} \log \left(8{\sqrt{mn}wd_vb_{\lambda}}\right)} + \operatorname{Pr}\left(\exists i \in [n] \text{,s.t.} |\bm{\lambda}[i]| > b_{\lambda} \right) $.
 Suppose the symbols are modulated using binary phase shift keying (BPSK) modulation, and the channel is AWGN with variance $\beta^2$, then $\operatorname{Pr}\left(\exists i \in [n] \text{ s.t. } |\bm{\lambda}[i]| > b_{\lambda} \right) = \left(1 - Q\left(\frac{\beta^2 b_{\lambda}+ 2}{2\beta} \right) - Q\left(\frac{\beta^2 b_{\lambda}- 2}{2\beta} \right) \right)^n$.
 \label{Proposition-2: unbounded-channel-snr}
\end{theorem}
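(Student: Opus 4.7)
The plan is to bootstrap Theorem~\ref{Theorem-1: dudley-entropy-intergral}, which already handles bounded log-likelihood ratios, to the unbounded setting by introducing a free truncation parameter $b_{\lambda}$ and splitting the BER loss according to the indicator of whether every coordinate of $\bm{\lambda}$ lies in $[-b_{\lambda},b_{\lambda}]$. Taking the minimum over $b_{\lambda}$ at the very end of the argument then yields the tightest truncation level for each $(n,d_v,T,m,w)$ regime.

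First, I would fix an arbitrary $b_{\lambda}>0$ and let $A=\{\exists\,i\in[n]:|\bm{\lambda}[i]|>b_{\lambda}\}$, with $A_j$ denoting the analogous event for the $j$-th training sample. Since $l_\text{BER}\in[0,1]$ and $l_\text{BER}\,\mathbbm{1}(A_j^c)\le l_\text{BER}$, the generalization gap splits as
\begin{align*}
\mathcal{R}_\text{BER}(f) - \hat{\mathcal{R}}_\text{BER}(f)
\le \Big(\mathbb{E}[l_\text{BER}\mathbbm{1}(A^c)]-\tfrac{1}{m}\textstyle\sum_{j=1}^{m}l_\text{BER}(f(\bm{\lambda}_j),\mathbf{x}_j)\mathbbm{1}(A_j^c)\Big)+\Pr(A).
\end{align*}
The second summand is exactly the desired tail contribution in $\phi(\cdot)$. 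The first summand is the generalization gap of the truncated loss class $\{l_\text{BER}(f(\bm{\lambda}),\mathbf{x})\mathbbm{1}(A^c):f\in\mathcal{F}_T\}$, which still takes values in $[0,1]$ and is supported only on inputs with $|\bm{\lambda}[i]|\le b_{\lambda}$ for every $i$. Because the Rademacher/covering-number chain used in Theorem~\ref{Theorem-1: dudley-entropy-intergral} (via Proposition~\ref{Proposition-1: bit-wise-Rademacher} and Lemmas~\ref{lemma-1: f is lipschitz}--\ref{Lemma-2: Covering number matrices}) only invokes the bound on $|\bm{\lambda}[i]|$ on samples where the loss is actually incurred, the same bound $12\sqrt{(nd_v^2T+1)(T+1)/m\cdot\log(8\sqrt{mn}wd_vb_{\lambda})}+4/m+\sqrt{\log(1/\delta)/(2m)}$ transfers verbatim. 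Adding the two pieces and minimising over $b_{\lambda}>0$ produces the first claim in \eqref{eq: unbounded-channel-snr-prop2}.

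For the BPSK/AWGN specialisation I would assume (as is standard for linear codes) that the all-zero codeword is transmitted, so the BPSK symbol in coordinate $i$ is $+1$ and $\mathbf{y}[i]=1+\mathbf{z}[i]$ with $\mathbf{z}[i]\sim\mathcal{N}(0,\beta^2)$. A short LLR calculation gives $\bm{\lambda}[i]=2\mathbf{y}[i]/\beta^2$, so $\{|\bm{\lambda}[i]|\le b_{\lambda}\}=\{-\beta^2 b_{\lambda}/2\le\mathbf{y}[i]\le\beta^2 b_{\lambda}/2\}$. A direct Gaussian tail computation (centering at $1$ and normalising by $\beta$) yields $\Pr(|\bm{\lambda}[i]|\le b_{\lambda})=1-Q(\tfrac{\beta^2 b_{\lambda}-2}{2\beta})-Q(\tfrac{\beta^2 b_{\lambda}+2}{2\beta})$. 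Finally, since the noise is i.i.d.\ across coordinates, the $\bm{\lambda}[i]$'s are independent, so multiplying over $i=1,\dots,n$ (and optionally passing to the complement $\Pr(A)=1-\Pr(A^c)$) recovers the stated closed form.

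The main obstacle is the covering-number step for the truncated class: the indicator $\mathbbm{1}(A^c)$ is not Lipschitz in the decoder weights and so cannot be absorbed through the standard contraction lemma. What makes the argument go through is that $\mathbbm{1}(A^c)$ depends only on $\bm{\lambda}$ and not on $f$, so it commutes with $\sup_{f}$ inside the empirical Rademacher complexity and effectively restricts the sum to the fixed subset of training indices with bounded LLRs. On that subset the hypothesis $|\bm{\lambda}[i]|\le b_{\lambda}$ required by Lemma~\ref{lemma-1: f is lipschitz} is satisfied by construction, so the covering estimate of Lemma~\ref{Lemma-2: Covering number matrices} applies unchanged and produces the same $b_{\lambda}$-dependence inside the logarithm, preserving the form of Theorem~\ref{Theorem-1: dudley-entropy-intergral}.
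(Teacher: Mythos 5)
Your proposal is correct and lands on the same decomposition idea as the paper, but it executes that idea more carefully. The paper's proof applies the law of total expectation to the true risk alone, bounds $\Pr\left(\forall i\in[n]:|\bm{\lambda}[i]|\le b_{\lambda}\right)$ and the conditional risk on the unbounded event each by $1$, and then simply asserts that the remaining conditional term ``follows from Theorem \ref{Theorem-1: dudley-entropy-intergral}'' --- leaving implicit how a uniform-convergence bound interacts with an empirical risk computed on unconditioned training samples. Your truncation $l_{\text{BER}}\cdot\mathbbm{1}(A^c)$ avoids conditioning entirely: population and empirical terms refer to the same unconditional distribution, and your observation that the indicator depends only on the data (hence commutes with $\sup_{f}$ and restricts the Rademacher sum to exactly those samples on which the hypothesis $|\bm{\lambda}[i]|\le b_{\lambda}$ of Lemma \ref{lemma-1: f is lipschitz} holds by construction) supplies precisely the justification the paper skips. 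What the paper's cruder route buys is brevity; what yours buys is a clean answer to the one genuinely delicate point, namely why the covering-number machinery survives the truncation even though the indicator is not Lipschitz in the weights.

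Two caveats, both minor. First, you take $\min_{b_{\lambda}}$ at the end even though the high-probability statement is proved for each fixed $b_{\lambda}$; a fully rigorous version needs a union bound over a discretized set of $b_{\lambda}$ values. The paper has the identical gap, so you inherit rather than introduce it. Second, your Gaussian computation yields $\Pr\left(\exists i:|\bm{\lambda}[i]|>b_{\lambda}\right)=1-\left(1-Q\left(\frac{\beta^2 b_{\lambda}-2}{2\beta}\right)-Q\left(\frac{\beta^2 b_{\lambda}+2}{2\beta}\right)\right)^n$, which matches the paper's appendix; the closed form displayed in the theorem statement omits the leading $1-{}$ (as written it equals $\Pr\left(\forall i:|\bm{\lambda}[i]|\le b_{\lambda}\right)$), so your ``pass to the complement'' remark correctly flags a typo in the statement rather than a flaw in your derivation. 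Your all-zero-codeword assumption is harmless: by the $\pm 1$ symmetry of equiprobable BPSK, $\Pr\left(|\bm{\lambda}[i]|\le b_{\lambda}\right)$ is the same under either transmitted symbol, which is why the paper's version without that assumption gives the same expression.
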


\noindent The proof of Theorem \ref{Proposition-2: unbounded-channel-snr} is presented in Appendix \ref{sec: Proposition-2-Proof}.
To take the unbounded input log-likelihood ratio into account for analysis, we use the law of total expectations, and condition the true risk with the event that the input log-likelihood ratio is not bounded, i.e., $|\bm{\lambda}[i]| > b_{\lambda}$ for any $i \in [n]$. 
We bound the probability that log-likelihood ratio is unbounded assuming BPSK modulation, and AWGN channel.  
In addition, we have the true risk conditioned on the event that the input log-likelihood ratio is bounded which directly follows from Theorem \ref{Theorem-1: dudley-entropy-intergral} in this paper. 
In \eqref{eq: unbounded-channel-snr-prop2}, the term ${12} \sqrt{\frac{(n d^2_v T + 1)({T+1})}{m} \log \left(8{\sqrt{mn}wd_vb_{\lambda}}\right)}$ is an increasing function of $b_{\lambda}$, and $\operatorname{Pr}\left(\exists i \in [n] \text{,s.t.} |\bm{\lambda}[i]| > b_{\lambda} \right) $ is a decreasing function of $b_{\lambda}$.
Therefore, minimizing the two terms over $b_{\lambda}$ provides the upper bound on the generalization gap.

\begin{remark} [\textbf{Minimizing the generalization gap by selecting the bound on LLR ($b_{\lambda}$) based on Channel SNR}]
The generalization gap in Theorem \ref{Proposition-2: unbounded-channel-snr} comprises of two terms: (a) ${12} \sqrt{\frac{(n d^2_v T + 1)({T+1})}{m} \log \left(8{\sqrt{mn}wd_vb_{\lambda}}\right)}$, which increases with $b_{\lambda}$; (b) $\operatorname{Pr}\left(\exists i \in [n] \text{,s.t.} |\bm{\lambda}[i]| > b_{\lambda} \right) $, which is a decreasing function of $b_{\lambda}$, $\beta$.
The choice of $b_{\lambda}$ to minimize the total generalization gap is a function of $\beta$.
To illustrate this, we plot the generalization gap bound obtained in Theorem \ref{Proposition-2: unbounded-input-channel-snr}, the generalization gap bound obtained in Theorem \ref{Theorem-1: dudley-entropy-intergral} for bounded input LLR, and the probability that the input LLR is unbounded in Fig. \ref{fig: Theorem-3-remark}(a). 
We set blocklength $n = 100$, variable node degree $d_v = 10$, decoding iterations $T = 10$, and dataset size $m = 10^6$.
As seen in Fig.  \ref{fig: Theorem-3-remark}(a), the generalization gap terms obtained in Theorem \ref{Theorem-1: dudley-entropy-intergral} are minimized in region R1 that corresponds to smaller values of $\beta$ (or large channel SNR).
This behavior is attributed to lower values of  $b_{\lambda}$ (i.e., $b_{\lambda} = 10$) as observed in Fig. \ref{fig: Theorem-3-remark}(b).
As the $\beta$ is increased (or reducing the channel SNR) in region R2 in Fig. \ref{fig: Theorem-3-remark}(a), the minimum generalization gap is obtained for larger values of $b_{\lambda}$.
The term $\operatorname{Pr}\left(\exists i \in [n] \text{,s.t.} |\bm{\lambda}[i]| > b_{\lambda} \right)$ also decreases with increase in $\beta$ (or lower channel SNR values). 
In other words, there is a trade-off between the terms ${12} \sqrt{\frac{(n d^2_v T + 1)({T+1})}{m} \log \left(8{\sqrt{mn}wd_vb_{\lambda}}\right)}$, and $\operatorname{Pr}\left(\exists i \in [n] \text{,s.t.} |\bm{\lambda}[i]| > b_{\lambda} \right) $ based on the channel SNR.
We adopt this approach to establish a dependency on the channel SNR, considering that our method for bounding the generalization gap was previously data-independent. In contrast, data-dependent bounds create a direct link between the generalization gap and the mutual information involving the input dataset and the algorithm output \cite{xu2017information,bu2020tightening}. This methodology implicitly accounts for various factors, including the dataset, hypothesis set, learning algorithm, and the loss function employed. Consequently, the integration of mutual information-based approaches could be crucial in demonstrating a direct correlation between the generalization gap and the channel SNR.
\end{remark}
\begin{figure}[!t]
	\begin{center}
		\includegraphics[scale = 0.32]{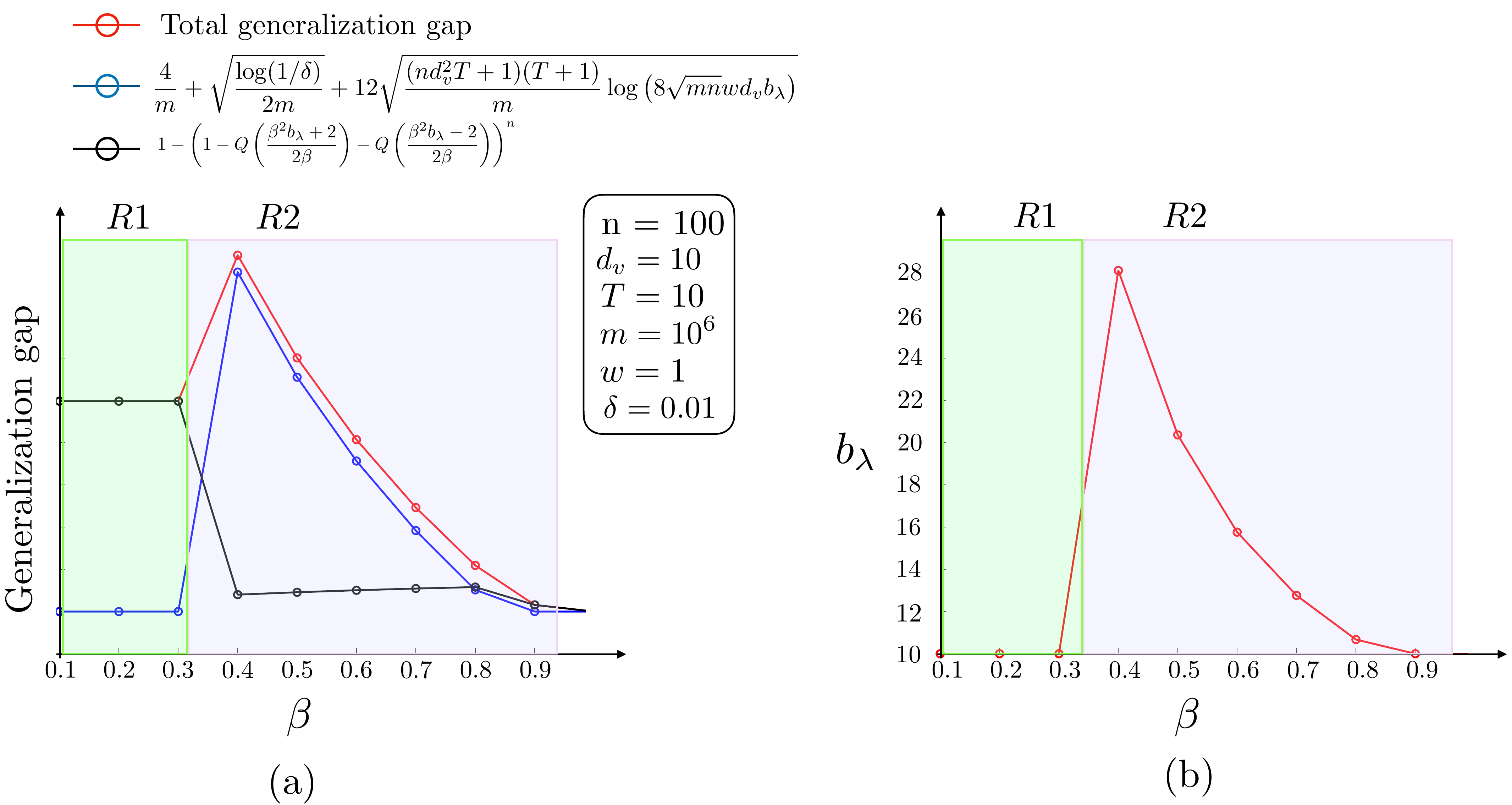}
	\caption{(a) The total generalization gap from Theorem \ref{Proposition-2: unbounded-input-channel-snr}, generalization gap from Theorem \ref{Theorem-1: dudley-entropy-intergral}, and the generalization gap due to unbounded log-likelihood ratio as a function of the channel SNR, (b) Selecting the bound on LLR ($b_{\lambda}$) to minimize the generalization gap. \label{fig: Theorem-3-remark}}
		\end{center}
  \vspace{-20pt}
\end{figure}

\section{Experimental Results}
\label{sec:experimental-results}
\noindent In this section, we present some numerical results to complement our theoretical bounds.  We consider binary phase shift keying (BPSK) modulation and AWGN channel, and the received channel output for $1 \leq i \leq n$ is given as  $\mathbf{y}[i] = (-1)^{\mathbf{x}[i]} + \mathbf{z}[i]$. We focus on Tanner codes with: (i) $n = 155$, $k = 64$, $d_v = 3$, $d_c = 5$; (ii) $n = 310$, $k = 128$, $d_v = 3$, $d_c = 5$ and study the empirical generalization performance of NBP decoders whose architecture was proposed in \cite{nachmani2016learning}, and also described in Section \ref{sec: preliminaries} of this paper. 
We adopt the software provided with the papers \cite{lugosch2017neural, nachmani2018deep} for our experiments.
We train the weights of the  NBP decoder until convergence by minimizing the cross-entropy loss between the true and the predicted codeword. 
We use ADAM optimizer for training with a learning rate of $0.01$.
We evaluate the NBP decoder by measuring the generalization gap (difference between  average BER attained on the test and training datasets). 
We perform each experiment for $10$ trials, and the distribution of the generalization gap over these $10$ randomized runs are plotted on a boxplot.
We next discuss the impact of the dataset size ($m$), and the decoding iterations ($T$) on the generalization gap. 

\noindent \textbf{a. Impact of training dataset size ($m$):} We consider the NBP decoder with $T=3$ decoding iterations (equivalently, $6$ layers) trained for channel SNR of $2$ dB; we vary the training data set size from $m=10^3$ to $m=10^4$ in steps of $1000$.  From the results in Fig. \ref{fig: generr_m_exp}(a), (b), we observe that the generalization gap is the largest for $m = 1000$, and generally decays with $m$.
For a smaller dataset size, the overfitting on the training samples is severe. 
Therefore, the NBP decoder fails to generalize on unseen samples in the test data. We also repeated the above experiment for various values of $T$ as well as by changing SNR. We found the inverse monotonic dependence on $m$ to be consistent across different values of $T$ and SNR (plots are  omitted due to lack of space).


\begin{figure}[!t]
	\begin{center}
		\includegraphics[scale = 0.25]{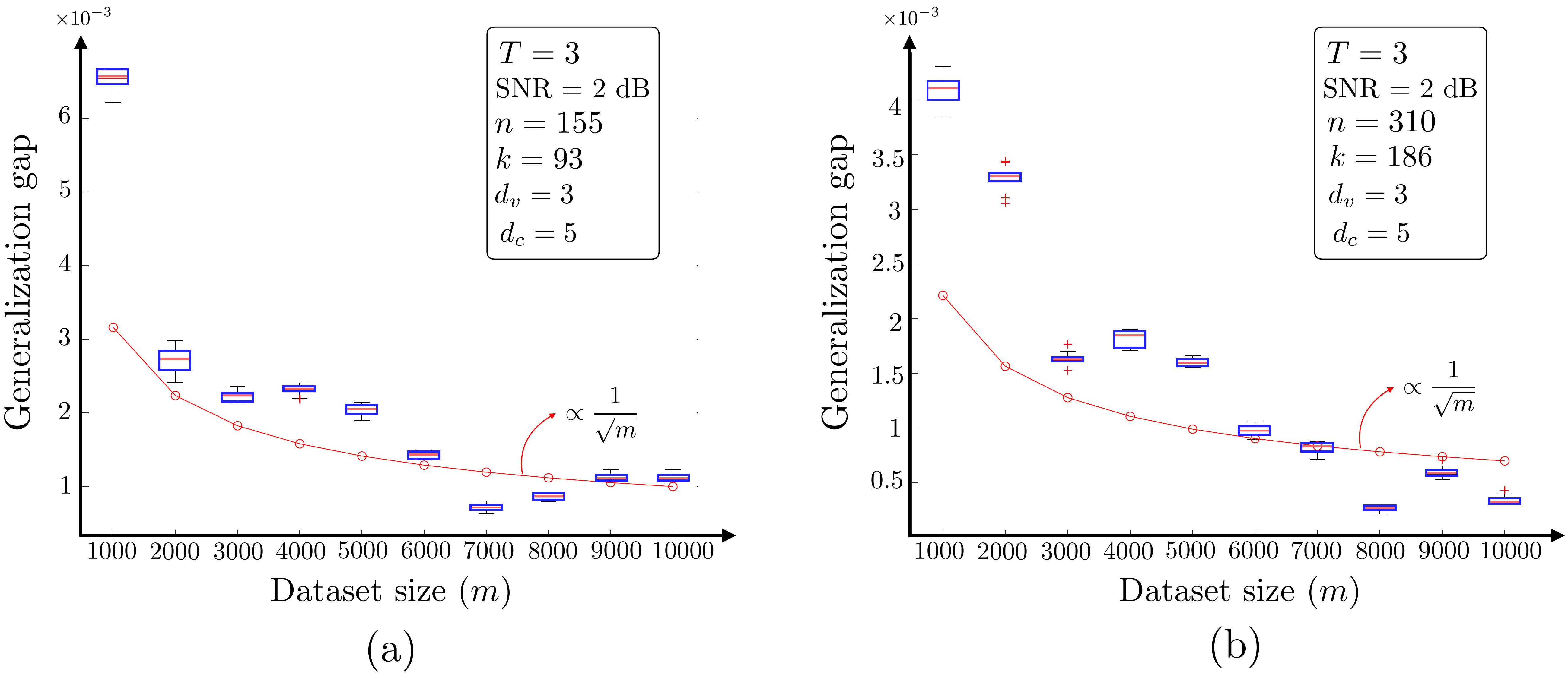}
	\caption{Generalization gap as a function of the dataset size $m$ at channel SNR = $2$ dB for (a) Tanner code with $n = 155$, and $k = 93$, (b) Tanner code with $n = 310$, and $k = 186$. \label{fig: generr_m_exp}}
		\end{center}
\end{figure}

 \begin{figure}[!t]
	\begin{center}
		\includegraphics[scale = 0.25]{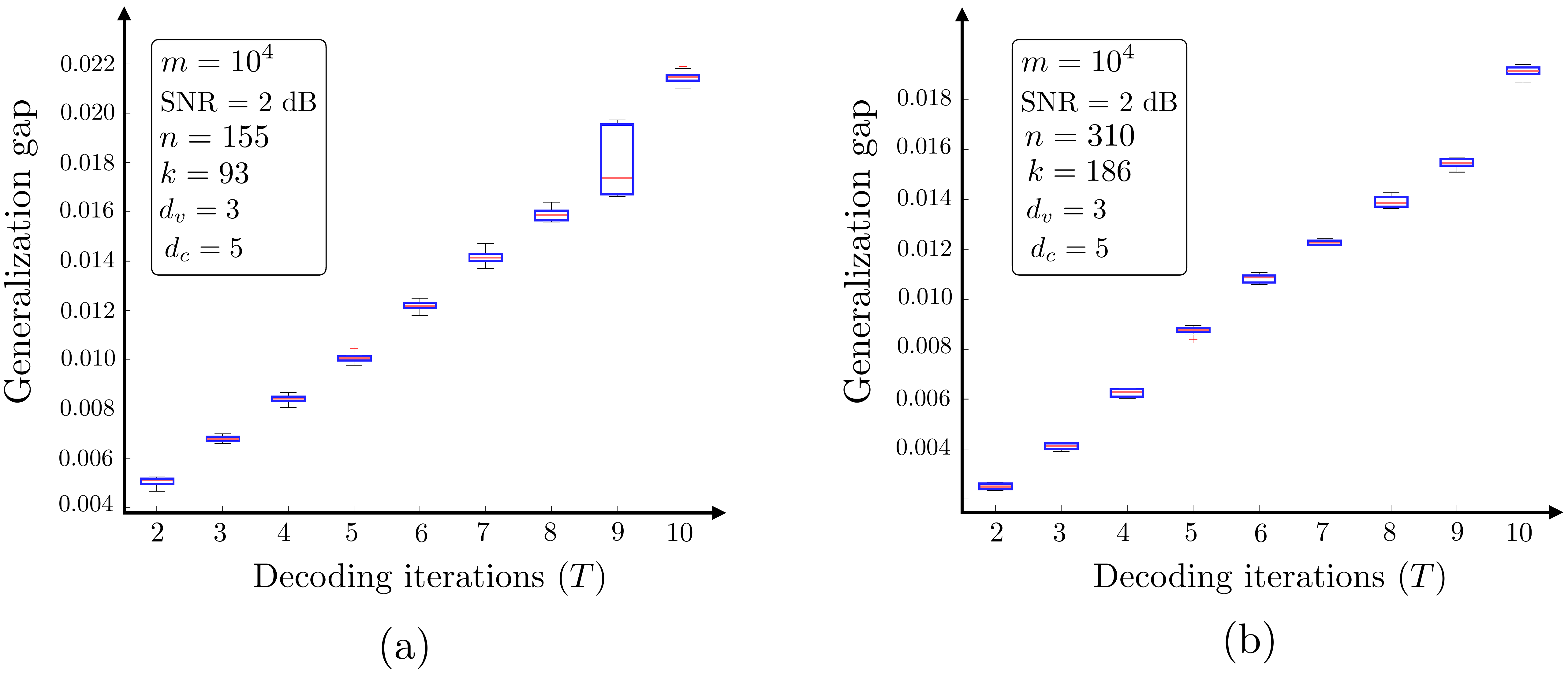}
\caption{Generalization gap as a function of the decoding iterations $T$ ($\propto$ number of layers) at channel SNR = $2$ dB for (a) Tanner code with $n = 155$, and $k = 93$, (b) Tanner code with $n = 310$, and $k = 186$. \label{fig: gen_err_T}}
 \vspace{-0.8cm}
		\end{center}
\end{figure}

\noindent \textbf{b. Impact of decoding iterations ($T$):}
In this experiment, we study the impact of decoding iterations (which is proportional to the number of hidden layers) in the NBP decoder on the generalization gap.
Here, we fixed channel SNR of $2$ dB, training dataset size $m = 10^4$ and varied $T$ from $\{2,3,\ldots, 10\}$. 
As seen in Fig. \ref{fig: gen_err_T} the generalization gap grows linearly with $T$, which is consistent with Theorem \ref{Theorem-1: dudley-entropy-intergral} (which behaves as $\mathcal{O}(T)$).  
 Increasing the number of parameters will cause overfitting of the NBP decoder resulting in a larger generalization gap.
We note that this observation (i.e., linear dependence on $T$) was consistent for different dataset sizes, and channel SNR values. 

 \begin{figure}[!t]
	\begin{center}
		\includegraphics[scale = 0.25]{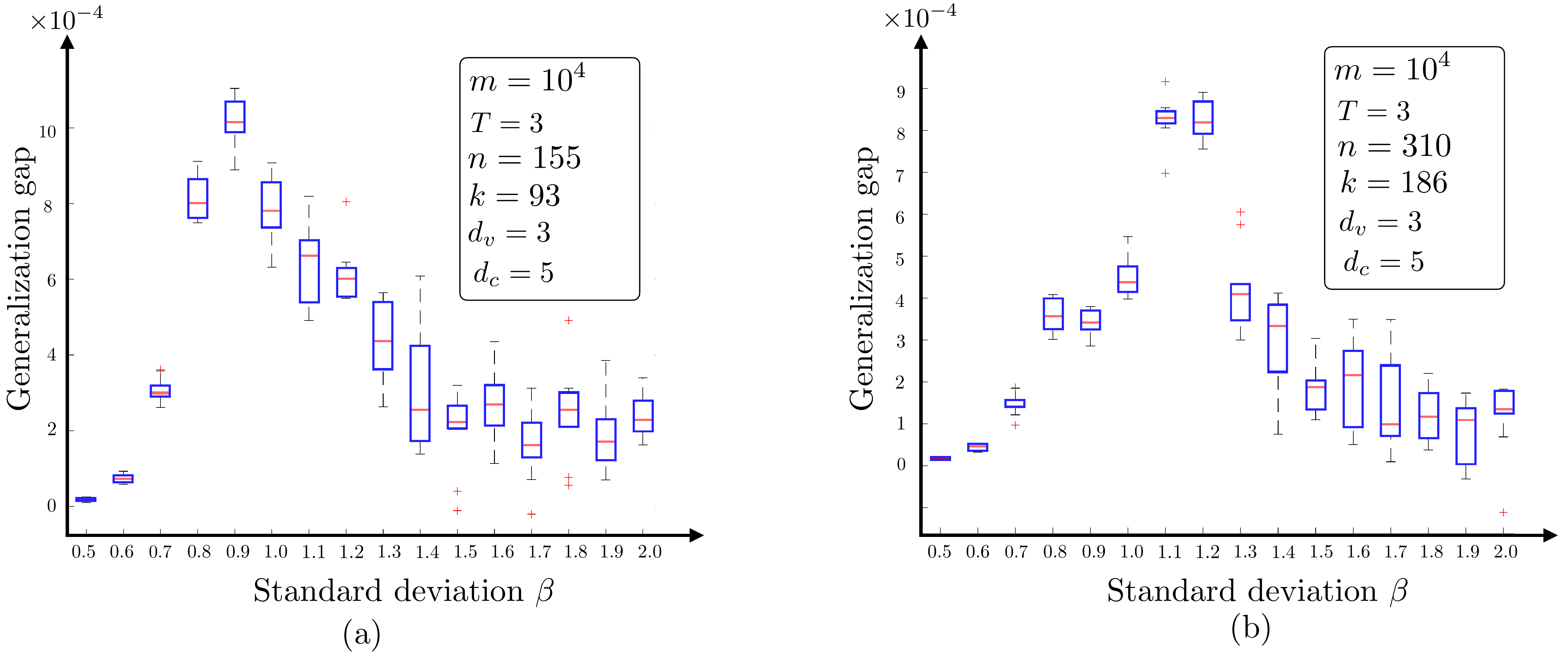}
\caption{Generalization gap as a function of the standard deviation of the Gaussian noise $\beta$ for (a) Tanner code with $n = 155$, and $k = 93$, (b) Tanner code with $n = 310$, and $k = 186$. \label{fig: gen_err_snr}}
 \vspace{-0.8cm}
		\end{center}
\end{figure}

\noindent \textbf{c. Impact of standard deviation of the Gaussian noise ($\beta$):}
In this experiment, we examine the effects of varying the standard deviation of Gaussian noise, represented by $\beta$, on the generalization gap of NBP decoders. 
We consider NBP decoders with $T=3$ iterations and trained on a dataset of size $m=10^4$.
According to the results depicted in Fig. \ref{fig: gen_err_snr}(a) and (b), we observe that the generalization gap exhibits non-monotonic behavior in relation to
$\beta$. Specifically, for $\beta \leq 1$, there is an increase in the generalization gap. Conversely, for $\beta \geq 1$, the generalization gap begins to decrease.
When $\beta$ is substantially large, the channel output becomes statistically independent of the input due to the increased channel noise. Consequently, this results in higher training and test BER, which in turn leads to a reduced generalization gap.

\noindent \textbf{d. Impact of blocklength ($n$):}
To study the impact of blocklength keeping the variable node degree fixed, we use Progressive Edge-Growth (PEG) algorithm \cite{hu2001progressive, hu2005regular} to construct two quasi-cyclic  LDPC (QC-LDPC) parity check matrices with: (i) $n = 680$, $k = 340$, $d_v = 3$, $d_c = 6$; (ii) $n = 1000$, $k = 500$, $d_v = 3$, $d_c = 7$.
We use the two codes as the parent code, and derive the parity check matrices for varying blocklengths keeping the variable node degree fixed by masking the columns of the parity-check matrix of the parent codes.
Specifically, in Fig. \ref{fig: gen_err_n} (a) we consider codes with blocklengths  $n = 425$, $510$, $595$ derived from QC-LDPC parity check matrix with $n = 680$, $k = 340$; and in Fig. \ref{fig: gen_err_n} (b) we consider codes with blocklengths  $n = 600$, $700$, $800$, $900$ derived from QC-LDPC parity check matrix with $n = 1000$, $k = 500$.
We note that in addition to the blocklength, the descendent codes have different code-rates than the parent QC-LDPC code.
However, this method allows us to generate descendent codes that have the same structure (or same nature) as that of the parent code, thereby, eliminating the impact of varying code structures on the generalization gap. 
To account for the different training bit-errors for codes with varying code rates, we normalize the generalization gap with the training bit-error-rate (or the empirical risk). 
We also plot the theoretical bound derived in Theorem \ref{Theorem-1: dudley-entropy-intergral} normalized with the training bit-error-rate  (i.e., ${12} \sqrt{\frac{(n d^2_v T + 1)({T+1})}{m} \log \left(8{\sqrt{mn}wd_vb_{\lambda}}\right)}/ {\hat{R}_{\text{BER}}(f)}$).
We consider NBP decoder with $T= 3$ decoding iterations trained for channel SNR of $2$ dB using dataset whose size $m = 10^4$. 
As seen in Fig. \ref{fig: gen_err_n} the generalization gap grows with $n$, which is consistent with Theorem \ref{Theorem-1: dudley-entropy-intergral}. 
The implication of this result is that for a fixed set of parity check equations, the decoding complexity increases with the blocklength; and the generalization gap is expected to increase with the blocklength for the codes with same structure.  


 \begin{figure}[!t]
	\begin{center}
		\includegraphics[scale = 0.25]{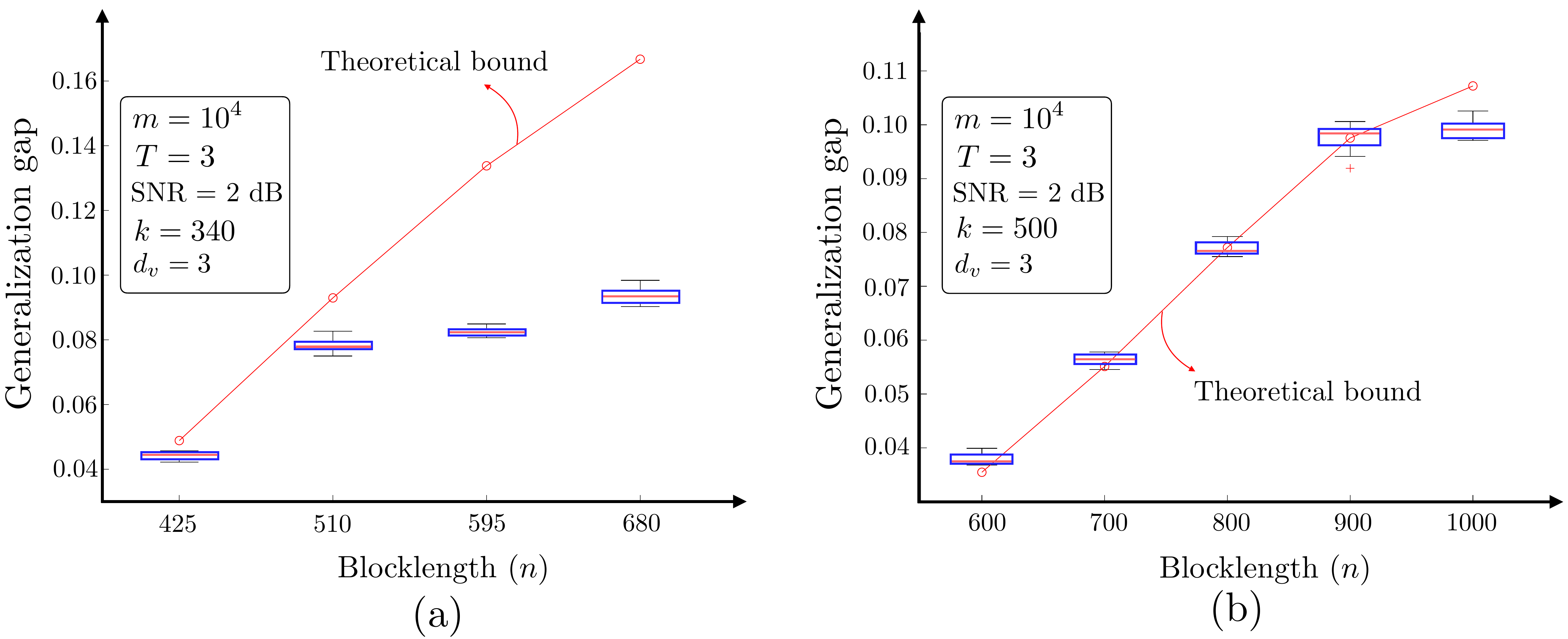}
\caption{Generalization gap as a function of the blocklength $n$ at channel SNR = $2$ dB for parent QC-LDPC codes with (a) $n = 1000$, and $k = 500$, (b) $n = 680$, and $k = 340$. The descendent codes with shorter blocklengths are derived by masking columns of parity check matrix in parent code.  \label{fig: gen_err_n}}
 \vspace{-0.8cm}
		\end{center}
\end{figure}

\section{Conclusions}
\label{sec:conclusion}

\noindent In this work, we presented results on the generalization gap of NBP decoders as a function of training dataset size, decoding iterations and code parameters (such as blocklength, message length, variable node degree, and parity check node degree).
We utilize the PAC-learning theory to express the generalization gap as a function of the Rademacher complexity of class of NBP decoders.
The sparse connections in the NBP decoder architecture plays a critical role in further upper bounding the Rademacher complexity term as a function of the code parameters.
Our bounds exhibit mild polynomial dependence on the blocklength $n$ and the decoding iterations (layers) $T$. 
To the best of our knowledge, our work is the first to provide theoretical guarantees for NBP decoders corresponding to both regular and irregular parity check matrices.
We also present generalizations of our theoretical result to account for different channel characteristics. 
We also presented comprehensive set of experiments using Tanner codes and Quasi-cyclic LDPC codes to evaluate our theoretical bounds in this paper.
In our empirical results, we observe that the generalization gap increases linearly with the decoding iterations, and grows with the blocklength. 
In addition, we observe that the generalization gap decays with the training dataset size, thereby supporting the theoretical results in this paper.
There are several interesting directions for future work, including a) obtaining generalization gap bounds for NBP decoders when the decoder is trained and tested over a range of SNR values; b) obtaining generalization bounds for ML based decoders with practical constraints (such as quantized weights); c) extending the ideas for other type of ML based codes/decoders (i.e., beyond BP type decoder architectures);  d) a framework to select the code parameters (i.e., blocklength and variable node degree pairs) that minimize the generalization gap for a given channel SNR is also an important research direction.

\bibliographystyle{ieeetr}
\bibliography{refs.bib}

\appendices
\section{Proof of Proposition \ref{Proposition-1: bit-wise-Rademacher}}
\label{sec: Proposition-1-Proof}
\noindent The proof of Proposition \ref{Proposition-1: bit-wise-Rademacher} directly follows from the Definition \ref{def-1:empirical-Rademacher-complexity} of empirical Rademacher complexity.
\begin{align}
    & R_{m} (\mathcal{F}_{L,T}) = \underset{\sigma}{\mathbb{E}} \left[\underset{{f \in \mathcal{F}_{T}}}{\sup} \frac{1}{m}\sum_{i = 1}^{m}\sigma_{i} l_\text{BER}(f({\bm{\lambda}_i}),\mathbf{x}_i)\right] \nonumber\\
    & {=}\underset{\sigma}{\mathbb{E}}\left[\sup_{{f \in \mathcal{F}_{T}}}\frac{1}{m}\sum_{i=1}^m\sigma_i \frac{d_H(f({\bm{\lambda}}_i),\mathbf{x}_i)}{n}\right]\nonumber\\
    &\stackrel{(a)}{=}\underset{\sigma}{\mathbb{E}}\left[\sup_{{f \in \mathcal{F}_{T}}}\frac{1}{mn}\sum_{i=1}^m\sigma_i \sum_{j=1}^n\left(\frac{1-f({\bm{\lambda}}_i){[j]}}{2}-\frac{1-\mathbf{x}_{i}{[j]}}{2}\right)^2\right]\nonumber\\
    &\stackrel{(b)}{=}\frac{1}{2}\underset{\sigma}{\mathbb{E}}\left[\sup_{{f \in \mathcal{F}_{T}}}\frac{1}{mn}\sum_{i=1}^m\sigma_i \sum_{j=1}^n\left(1-f({\bm{\lambda}}_i){[j]} \cdot {\mathbf{x}_{i}}{[j]}\right)\right]\nonumber\\
    & {=}\frac{1}{2}\underset{\sigma}{\mathbb{E}}\left[\sup_{{f \in \mathcal{F}_{T}}}\frac{1}{mn} \sum_{i=1}^m\sum_{j=1}^n\left(-f({\bm{\lambda}}_i){[j]} \cdot (\sigma_{i} \cdot \mathbf{x}_{i}{[j]})\right)\right].
    \label{eq: proposition-1-eq1}
\end{align}
where, in step (a), we consider the mapping such that $\mathbf{x}_i[j] \in \{-1, 1\}$, and $f({\bm{\lambda}}_i[j]) \in \{-1, 1\}$; and in step (b), we express the Hamming distance $d_H(f(\bm{\lambda}_i),\mathbf{x}_i) = \frac{1}{n}\sum_{j=1}^n\left(1-f({\bm{\lambda}}_i){[j]} \times  {\mathbf{x}_{i}}{[j]}\right)$. 
In what follows, we take the supremum over $\mathcal{F}_T$ inside the summation. 
We note that the distribution of $-\sigma_i \mathbf{x}_{i}{[j]}$ and  $\sigma_i$ is the same, and we obtain: 
\begin{align}
    R_{m} (\mathcal{F}_{L,T}) 
    &{\leq} \frac{1}{2}\underset{\sigma}{\mathbb{E}}\left[\sum_{j=1}^n\sup_{{f \in \mathcal{F}_{T}}}\frac{1}{mn} \sum_{i=1}^m\left(-f({\bm{\lambda}}_i){[j]} \cdot (\sigma_{i} \cdot \mathbf{x}_{i}{[j]})\right)\right]\nonumber\\
    &=\frac{1}{2n}\sum_{j=1}^n \underset{\sigma}{\mathbb{E}}\left[\sup_{{f \in \mathcal{F}_{T}}}\frac{1}{m}\sum_{i=1}^m\sigma_i \cdot f({\bm{\lambda}}_i){[j]}\right]\nonumber\\
    &=\frac{1}{2n} \sum_{j = 1}^{n} R_m(\mathcal{F}_{T}[j]).
    \label{eq: proposition-1-eq2}
\end{align}
In  \eqref{eq: proposition-1-eq2}, the last step follows from the Definition \ref{def-2:bit-wise-Rademacher-complexity}. 

\section{Proof of Theorem \ref{Theorem-1: dudley-entropy-intergral}}
\label{sec: Theorem-1-Proof}
\noindent In this appendix, we present the proof of Theorem \ref{Theorem-1: dudley-entropy-intergral}.
 We first define the covering number and packing number of $\mathcal{F}_{T}$ which is the set of all NBP decoders with $T$ decoding iterations.
\begin{definition} \textbf{(Covering Number)}
The covering number $\mathcal{N}(\mathcal{F}_T ,\epsilon , \|\cdot\|_k)$ of the set $\mathcal{F}_{T}$ with respect to the $k$-{th} norm for $\epsilon > 0$ is defined as
\begin{align}
     \mathcal{N}(\mathcal{F}_T ,\epsilon , \|\cdot\|_k) = \underset{n}{\min} |\{g_1,\cdots,g_n\}| \label{definition-1: covering-number-a}, & \\  \text{s.t. } \underset{1 \leq i \leq n}{\min} \|{f}(\bm{\lambda})-{g_i}(\bm{\lambda})\|_k \leq \epsilon \label{definition-1: covering-number-b}. &
\end{align}
\eqref{definition-1: covering-number-b} must be satisfied for any $f \in \mathcal{F}_T$ and input log likelihood ratio $\bm{\lambda}$; then the set $\{{g_1},\cdots,{g_n}\} \subseteq \mathcal{F}_T$ is the $\epsilon$-cover of $\mathcal{F}_T$. 
\label{definition-1: covering-number}
\end{definition}
\begin{definition} \textbf{(Packing number)} For a set of NBP decoders $\mathcal{F}_T$, its packing number $\mathcal{M}(\mathcal{F}_T ,\epsilon , \|\cdot\|_k)$ with respect to the $k$-{th} norm for $\epsilon > 0$ is defined as
\begin{align}
    \mathcal{M}(\mathcal{F}_T ,\epsilon , \|\cdot\|_k) = \underset{n}{\max} |\{g_1,\cdots,g_n\}| \label{definition-1: packing-number-a}, & \\  \text{s.t. }   \|{g_i}(\bm{\lambda})-{g_j}(\bm{\lambda})\|_k > \epsilon \label{definition-1: packing-number-b}. &
\end{align}
\eqref{definition-1: packing-number-b} must be satisfied for any $g_i,g_j \in \{g_1,\cdots,g_n\}$; and the set $\mathcal{P}(\mathcal{F}_T ,\epsilon , \|\cdot\|_k) = \{g_1,\cdots,g_n\} \subset \mathcal{F}_T$ that satisfies \eqref{definition-1: packing-number-a}, \eqref{definition-1: packing-number-b} is called the $\epsilon$-packing of $\mathcal{F}_T$. 
\label{definition-2: packing-number}
\end{definition}

\noindent From Proposition \ref{Proposition-1: bit-wise-Rademacher}, we have that, 
    $\mathcal{R}_\text{BER}(f)
    \leq \mathcal{\hat{R}}_{\text{BER}}(f) + \frac{1}{2n} \sum_{j = 1}^{n} R_m(\mathcal{F}_{T}[j]) +\sqrt{\frac{\log(1/\delta)}{2m}}$.
Next, we use a PAC-Learning approach to bound the bit-wise Rademacher complexity term $R_m(\mathcal{F}_{T}[j])$ as a function of $m$, and spectral norm of the weight matrices of the NBP decoder.
We use similar reasoning to that used in generalization bound results for graph neural networks and recurrent neural networks in \cite{garg2020generalization, chen2019generalization}; and we can adopt Lemma A.5. in \cite{bartlett2017spectrally} to bound the bit-wise Rademacher complexity as:
\begin{align}
    R_m(\mathcal{F}_{T}[j]) \leq \underset{\alpha > 0}{\inf} \left( \frac{4\alpha}{\sqrt{m}} + \frac{12}{m} \int\limits_{\alpha}^{\sqrt{m}} \sqrt{\log \mathcal{N}(\mathcal{F}_{T}[j],\epsilon, \|\cdot\|_2)} d\epsilon \right).
    \label{eq: Rademacher upper bound bartlett}
\end{align}

\noindent To further upper bound the bit-wise Rademacher complexity, we make use of the upper bounds on the spectral norm of the weight matrices.
 Recall our assumption made in Section \ref{sec: main-results} that the maximum absolute value of the non-zero entries in the weight matrices are bounded by a non-negative constant $w$.
In addition, we also use the result in \cite{mathias1990spectral} that the spectral norm of any matrix $\mathbf{A}$ can be upper bounded as a function of its maximum absolute column sum norm $\|\mathbf{A}\|_1$, and maximum absolute row sum norm $\|\mathbf{A}\|_{\infty}$  as 
$\|\mathbf{A}\|_2 \leq \sqrt{\|\mathbf{A}\|_{\infty}\|\mathbf{A}\|_{1}}$.
Using the assumption and the result in \cite{mathias1990spectral}, the spectral norm of the weight matrices $\mathbf{W}^{(t)}_1$, $\mathbf{W}^{(t)}_2$ for any $ 1 \leq t \leq T$ can be upper bounded as: 
\begin{align}
   & B_{W_1} = \|\mathbf{W}^{(t)}_1\|_2 \leq w \sqrt{d_v}, \nonumber \\ 
   & B_{W_2} = \|\mathbf{W}^{(t)}_2\|_2 \leq w {(d_v-1)}. 
   \label{eq: spectral norm bounds w1-w2}
\end{align}
Similarly, we obtain the $L_2$ norm bounds on for any $j-$th row vector in matrices $\mathbf{W}_3$ and $\mathbf{W}_4$ as:
\begin{align}
   & B_{w_3} = \|\mathbf{W}_3[j,:]\|_2 \leq w \sqrt{d_v},  \nonumber \\ 
   & B_{w_4} = \|\mathbf{W}_4[j,:]\|_2 \leq w.
   \label{eq: spectral norm bounds w3-w4}
\end{align}

 \noindent We use these spectral norm bounds in Lemma \ref{lemma-1: f is lipschitz} in which we show that the NBP decoder is Lipschitz continuous with respect to its weight matrices. That is, for the $j$-{th} bit we have, 
\begin{align}
\left\|f({\bm{\lambda}}){[j]} - f^{\prime}({\bm{\lambda}}){[j]}\right\|_{2} 
 \leq  & \sum^{T}_{i=1}  \rho_{_{W^{(i)}_1}} \left\|\mathbf{W}^{(i)}_1 -\mathbf{W}_1^{\prime {(i)}}\right\|_{2}  + 
  \sum^{T}_{i=2}  \rho_{_{W^{(i)}_2}} \left\|\mathbf{W}^{(i)}_2 -\mathbf{W}_2^{\prime {(i)}}\right\|_{2}
  \nonumber \\
 & +  \rho_{_{w_3}} \left\|\mathbf{W}_3[j,:]-\mathbf{W}^{\prime}_3[j,:]\right\|_{2}   + \rho_{_{w_4}} \left\|\mathbf{W}_4[j,:]- \mathbf{W}^{\prime}_4[j,:]  \right\|_{2}.
\label{eq: lemma-4-lipschitz-result}
\end{align} 
In \eqref{eq: lemma-4-lipschitz-result}, $\rho_{_{W_1}}$, $\rho_{_{W_2}}$, $\rho_{_{w_3}}$, $\rho_{_{w_4}}$ are Lipschitz parameters that is a function of the spectral norm bounds of the weight matrices, the $L_2$ norm bound of input $\bm{\lambda}$, and are given as,     
\begin{align}
&\rho_{_{W^{(i)}_1}}  =   n  b_{\lambda} B_{w_3} \left(\sqrt{n} B_{W_2} \right)^{T-i}, 
\nonumber \\
&\rho_{_{W^{(i)}_2}}  =  n T b_{\lambda} B_{W_1} B_{w_3}  \frac{\left(\sqrt{n}  B_{W_2} \right)^{T-1}-1}{\sqrt{n}  B_{W_2}-1}+ n b_{\lambda} B_{W_1} B_{w_3}\left(B_{W_2}\right)^{T-2} \frac{(\sqrt{n})^{T-1}-1}{\sqrt{n}-1},
\nonumber \\
&\rho_{_{w_3}}  = \sqrt{n}B_{W_1} b_{\lambda} \left( \frac{\left( B_{W_2}\right)^{T-1}-1}{ B_{W_2} - 1} +   \left(B_{W_2}\right)^{T-1} \right), 
\nonumber \\
&\rho_{_{w_4}}  = b_{\lambda}.
\label{eq: w1w2w3w4-lipschitz}
\end{align}
As a result of \eqref{eq: lemma-4-lipschitz-result}, the covering number $\mathcal{N}(\mathcal{F}_{T}[j], {\epsilon}, \|\cdot\|_2)$ in \eqref{eq: Rademacher upper bound bartlett} can be upper bounded using the covering number of all distinct weight matrices as follows,
\begin{align}
    \mathcal{N}(\mathcal{F}_T[j], {\epsilon}, \|\cdot\|_2) \leq & 
    \prod_{i = 1}^{T} \mathcal{N}(\mathbf{W}^{(i)}_1, \frac{\epsilon}{(2T + 1)\rho_{_{W^{(i)}_1}}}, \|\cdot\|_F) \times \prod_{i = 2}^{T}  \mathcal{N}(\mathbf{W}^{(i)}_2, \frac{\epsilon}{(2T + 1)\rho_{_{W^{(i)}_2}}}, \|\cdot\|_F)
     \nonumber \\
     &  \times \mathcal{N}(\mathbf{W}_3[j,:], \frac{\epsilon}{(2T + 1)\rho_{_{w_3}}}, \|\cdot\|_F) \times \mathcal{N}(\mathbf{W}_4[j,:], \frac{\epsilon}{(2T + 1)\rho_{_{w_4}}}, \|\cdot\|_F)  
     \label{eq:Ft-upper-bound-matrices}
\end{align}

  \noindent 
 We now upper bound the covering number of the set of decoders $\mathcal{F}_{T}$ in \eqref{eq:Ft-upper-bound-matrices} using Lemma \ref{Lemma-2: Covering number matrices}, in which we derive an upper bound on the covering number of sparse matrices as a function of the number of rows, columns, number of non-zero entries in the sparse matrix, and the spectral norm bound.

\noindent \textbf{Bounding $\mathcal{N}(\mathcal{F}_T[j], {\epsilon}, \|\cdot\|_2)$:}
We know that the matrices $\mathbf{W}_1$,  $\mathbf{W}_2$, $\mathbf{W}_3$, and $\mathbf{W}_4$ have $1$, $d_v - 1$, $d_v$, and $1$ non-zero entries in each row, respectively.  
Using Lemma \ref{Lemma-2: Covering number matrices}, the covering number of the weight matrices of the NBP decoder $f$ for $1 \leq  i \leq T$ can be bounded as follows:
\begin{align}
    & \mathcal{N}(\mathbf{W}^{(i)}_1, \frac{\epsilon}{(2T + 1) \rho_{_{W^{(i)}_1}}}, \|\cdot\|_F)  \leq   {\left(1 + \frac{ (4T + 2) \sqrt{n} B_{W_1}\rho_{_{W^{(i)}_1}}}{\epsilon}\right)^{n d_v}}  \nonumber \\
    & \mathcal{N}(\mathbf{W}^{(i)}_2, \frac{\epsilon}{(2T + 1) \rho_{_{W^{(i)}_2}}}, \|\cdot\|_F) \leq {\left(1 + \frac{ (4T + 2)  \sqrt{nd_v} B_{W_2}\rho_{_{W^{(i)}_2}}}{\epsilon}\right)^{(d_v - 1) nd_v}}  \nonumber \\
    & \mathcal{N}(\mathbf{W}_3[j,:], \frac{\epsilon}{(2T + 1) \rho_{_{w_3}}}, \|\cdot\|_2) \leq {\left(1 + \frac{ (4T + 2)  B_{w_3}\rho_{_{w_3}}}{\epsilon}\right)^{d_v}}   \nonumber \\
    & \mathcal{N}(\mathbf{W}_4[j,:], \frac{\epsilon}{(2T + 1) \rho_{_{w_4}}}, \|\cdot\|_2) \leq {\left(1 + \frac{ (4T + 2)  B_{w_4}\rho_{_{w_4}}}{\epsilon}\right)}
    \label{eq: covering-number-matrices}
\end{align}

\noindent Substituting \eqref{eq: covering-number-matrices} in \eqref{eq:Ft-upper-bound-matrices}, we obtain,
\begin{align}
    \mathcal{N}(\mathcal{F}_T[j], {\epsilon}, \|\cdot\|_2)
      \leq &
      {\prod_{i = 1}^{T} \hspace{-2pt} \left(\hspace{-2pt}1 \hspace{-2pt}+\hspace{-2pt} \frac{ (4T + 2) \sqrt{n} B_{W_1}\rho_{_{W^{(i)}_1}}}{\epsilon}\hspace{-2pt}\right)^{n d_v}} \hspace{-10pt}  \times  {\prod_{i = 2}^{T} \hspace{-2pt} \left(\hspace{-2pt} 1 + \frac{ (4T + 2) \sqrt{nd_v} B_{W_2}\rho_{_{W^{(i)}_2}}}{\epsilon}\hspace{-2pt}\right)^{(d_v - 1) nd_v}} 
     \nonumber \\&  \times {\left(1 + \frac{ (4T + 2)  B_{w_3}\rho_{_{w_3}}}{\epsilon}\right)^{d_v}}  \times  {\left(1 + \frac{ (4T + 2) B_{w_4}\rho_{_{w_4}}}{\epsilon}\right)}. 
     \label{eq: covering-number-matrices-prod}
\end{align}

\noindent Substituting the values of Lipschitz parameters obtained in \eqref{eq: w1w2w3w4-lipschitz}, and assuming $nd_v > (n+1)$ , the term $\mathcal{N}(\mathcal{F}_T[j], {\epsilon}, \|\cdot\|_2)$ can be bounded as: 
\begin{align}
\mathcal{N}(\mathcal{F}_T[j], {\epsilon}, \|\cdot\|_2) \leq  \left(1 + \frac{(4T + 2) \sqrt{nd_v}\left(c_1 + c_2 \right)}{\epsilon}  \right)^{(n d^2_v T + 1)}. 
\label{eq: covering-number-Ft-using-matrices}
\end{align}
where, $c_1 = nT B_{W_1} B_{W_2} B_{w_3} b_{\lambda} \frac{( \sqrt{n} B_{W_2})^{T-1}-1}{\sqrt{n} B_{W_2}-1}$,
 $c_2 =  n B_{W_1} B_{w_3} b_{\lambda} \left(\sqrt{n} B_{W_2} \right)^{T-1}$.
 This value of $c_1$, and $c_2$ are obtained for large enough $n$, $T$, and $d_v$, which is an assumption that we make. 

\noindent We now make use of the spectral norm bounds in \eqref{eq: spectral norm bounds w1-w2} and \eqref{eq: spectral norm bounds w3-w4} in \eqref{eq: covering-number-Ft-using-matrices}; and further upper bound the covering number $\mathcal{N}(\mathcal{F}_T[j], {\epsilon}, \|\cdot\|_2)$ as,
\begin{align}
\mathcal{N}(\mathcal{F}_T[j], {\epsilon}, \|\cdot\|_2) \leq \left(1 + \frac{ (4T+2)\sqrt{nd_v}}{\epsilon} b_{\lambda} (T+1) \left(\sqrt{n}wd_v\right)^{T+1} \right)^{(n d^2_v T + 1)}
\label{eq: covering-number-Ft-using-matrices-2}
\end{align}

\noindent In \eqref{eq: Rademacher upper bound bartlett}, the integral can be further upper bounded using \eqref{eq: covering-number-Ft-using-matrices-2} and we have that, 
\begin{align}
     \int\limits_{\alpha}^{\sqrt{m}} \sqrt{\log \mathcal{N}(\mathcal{F}_{T}[j],\epsilon, \|\cdot\|_2)} d\epsilon & \leq   \int\limits_{\alpha}^{\sqrt{m}} \sqrt{{(n d^2_v T + 1)} \text{term}_1} d\epsilon \leq \sqrt{{m(n d^2_v T + 1)} \text{term}_2}
   \label{eq: covering intergral upper bound}
 \end{align}
 where, $\text{term}_1 \hspace{-2pt} = \hspace{-2pt} \log  \hspace{-2pt} \left( \hspace{-2pt} \frac{8 \sqrt{nd_v}}{\epsilon} b_{\lambda} (T+1)^2 \left(\sqrt{n}wd_v\right)^{T+1} \hspace{-1pt}\right) \hspace{-1pt}$; $\text{term}_2 \hspace{-2pt} = \hspace{-2pt} \log  \hspace{-2pt} \left( \hspace{-2pt} {8 \sqrt{mnd_v}} b_{\lambda} (T+1)^2 \left(\sqrt{n}wd_v\right)^{T+1} \hspace{-1pt} \right)\hspace{-1pt}$.
Substituting \eqref{eq: covering intergral upper bound} in \eqref{eq: Rademacher upper bound bartlett} 
and assuming that the term $(\sqrt{m}b_{\lambda})^{T+1}$ is large enough to approximate the term inside the logarithm, we have that, 
\begin{align}
     \mathcal{R}_\text{BER}(f) &
    - \mathcal{\hat{R}}_{\text{BER}}(f) 
    \leq  \frac{4}{{m}} +\sqrt{\frac{\log(1/\delta)}{2m}} + {12} \sqrt{\frac{(n d^2_v T + 1)({T+1})}{m} \log \left(8{\sqrt{mn}wd_vb_{\lambda}}\right)} .
\end{align}

\section{Lipschitzness in NBP decoders}
\label{sec: Lemma-1-Proof}
 \begin{lemma}
 \label{lemma-1: f is lipschitz}
 For $n$ length codeword, the bit-wise output of the NBP decoder $f \in \mathcal{F}_T$ is Lipschitz in its weight matrices $\mathbf{W}_1,\mathbf{W}_2,\mathbf{W}_3,\mathbf{W}_4$ such that,
\begin{align*}
 \left\|f({\bm{\lambda}}){[j]} -f^{\prime}({\bm{\lambda}}){[j]}\right\|_{2} 
& \leq  \sum^{T}_{i=1}  \rho_{_{W^{(i)}_1}} \left\|\mathbf{W}^{(i)}_1 -\mathbf{W}_1^{\prime {(i)}}\right\|_{2} + 
  \sum^{T}_{i=2}  \rho_{_{W^{(i)}_2}} \left\|\mathbf{W}^{(i)}_2 -\mathbf{W}_2^{\prime {(i)}}\right\|_{2} \nonumber \\ 
 & +  \rho_{_{w_3}} \left\|\mathbf{W}_3[j,:]-\mathbf{W}^{\prime}_3[j,:]\right\|_{2}  +\rho_{_{w_4}} \left\|\mathbf{W}_4[j,:]- \mathbf{W}^{\prime}_4[j,:]  \right\|_{2}.
\end{align*}
The coefficients $\rho_{_{W_1}}$, $\rho_{_{W_2}}$, $\rho_{_{w_3}}$ and $\rho_{_{w_4}}$ are as follows:  
\begin{align}
&\rho_{_{W^{(i)}_1}}  =   {n  b_{\lambda} B_{w_3}} \left(\sqrt{n} B_{W_2} \right)^{T-i}, 
\nonumber \\
&\rho_{_{W^{(i)}_2}}  =  n T b_{\lambda} B_{W_1} B_{w_3}  \frac{\left(\sqrt{n}  B_{W_2} \right)^{T-1}-1}{\sqrt{n}  B_{W_2}-1} + n b_{\lambda} B_{W_1} B_{w_3}\left(B_{W_2}\right)^{T-2} \frac{(\sqrt{n})^{T-1}-1}{\sqrt{n}-1},
\nonumber \\
&\rho_{_{w_3}}  = \sqrt{n}B_{W_1} b_{\lambda} \left( \frac{\left( B_{W_2}\right)^{T-1}-1}{ B_{W_2} - 1} +   \left(B_{W_2}\right)^{T-1} \right), 
\nonumber \\
&\rho_{_{w_4}}  = b_{\lambda}.
\end{align}
\label{Lemma-1: Lipschitz continuity of f}
  \end{lemma}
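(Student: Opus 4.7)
\textbf{Proof proposal for Lemma \ref{Lemma-1: Lipschitz continuity of f}.}

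The plan is a layer-by-layer induction on the decoding iteration $t$, in which I track how a perturbation of the weight matrices propagates forward through the unfolded NBP network. Let $f$ and $f'$ denote two NBP decoders in $\mathcal{F}_T$ sharing the same input $\bm{\lambda}$, and let $\mathbf{v}_t, \mathbf{p}_t$ (resp.\ $\mathbf{v}_t', \mathbf{p}_t'$) denote their intermediate variable-node and parity-node activations. Using the standard telescoping identity
\begin{align}
\mathbf{v}_t - \mathbf{v}_t' = \bigl(\mathbf{W}_1^{(t)} - \mathbf{W}_1^{\prime(t)}\bigr)\bm{\lambda} + \bigl(\mathbf{W}_2^{(t)} - \mathbf{W}_2^{\prime(t)}\bigr)\mathbf{p}_{t-1}' + \mathbf{W}_2^{(t)} (\mathbf{p}_{t-1} - \mathbf{p}_{t-1}'),
\end{align}
and the triangle inequality, I obtain a recursion that relates $\|\mathbf{v}_t - \mathbf{v}_t'\|_2$ to $\|\mathbf{p}_{t-1} - \mathbf{p}_{t-1}'\|_2$ plus perturbation terms in the current weights. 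To close the loop, I use the fact that the parity-check layer $\mathbf{p}_t = g(\mathbf{v}_t)$ has no learnable parameters and is $1$-Lipschitz component-wise (both the $\tanh$ composition and the min-sum approximation satisfy this), hence $\|\mathbf{p}_t - \mathbf{p}_t'\|_2 \leq \|\mathbf{v}_t - \mathbf{v}_t'\|_2$ up to a $\sqrt{n}$ factor relating the $\ell_\infty$ and $\ell_2$ norms on a vector of length $nd_v$.

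Before unrolling the recursion I first establish forward norm bounds: $\|\bm{\lambda}\|_2 \leq \sqrt{n}\, b_\lambda$, and by induction $\|\mathbf{v}_t'\|_2$ and $\|\mathbf{p}_t'\|_2$ are controlled by $\sqrt{n}\, b_\lambda$ multiplied by products of $B_{W_1}$ and $\sqrt{n}\, B_{W_2}$ accumulated through the preceding layers. These are needed to bound the term $\|(\mathbf{W}_2^{(t)}-\mathbf{W}_2^{\prime(t)})\mathbf{p}_{t-1}'\|_2$ by $\|\mathbf{W}_2^{(t)}-\mathbf{W}_2^{\prime(t)}\|_2 \cdot \|\mathbf{p}_{t-1}'\|_2$, and similarly for the variable-layer perturbation applied to $\bm{\lambda}$. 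With these ingredients, the recursion $\|\mathbf{v}_t - \mathbf{v}_t'\|_2 \leq A_t + \sqrt{n}\,B_{W_2}\|\mathbf{v}_{t-1}-\mathbf{v}_{t-1}'\|_2$, where $A_t$ collects the current-layer weight perturbations scaled by the forward norm bounds, unrolls to a sum whose coefficients in front of $\|\mathbf{W}_1^{(i)}-\mathbf{W}_1^{\prime(i)}\|_2$ are of the form $(\sqrt{n}\,B_{W_2})^{T-i}$ times $\sqrt{n}\,b_\lambda$; the coefficients of $\|\mathbf{W}_2^{(i)}-\mathbf{W}_2^{\prime(i)}\|_2$ collect a geometric series, which is exactly the origin of the terms $\frac{(\sqrt{n}\,B_{W_2})^{T-1}-1}{\sqrt{n}\,B_{W_2}-1}$ and $\frac{(\sqrt{n})^{T-1}-1}{\sqrt{n}-1}$ appearing in $\rho_{W_2^{(i)}}$.

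For the final output layer I write $f(\bm{\lambda})[j] = s\!\left(\mathbf{W}_4[j,:]\bm{\lambda} + \mathbf{W}_3[j,:]\mathbf{p}_T\right)$ and similarly for $f'$. Since the sigmoid is $1$-Lipschitz I can drop it, and then use the same telescoping trick as above to obtain
\begin{align}
\|f(\bm{\lambda})[j] - f'(\bm{\lambda})[j]\|_2 \leq \|\mathbf{W}_4[j,:]-\mathbf{W}_4'[j,:]\|_2\,\|\bm{\lambda}\|_\infty + \|\mathbf{W}_3[j,:]-\mathbf{W}_3'[j,:]\|_2\,\|\mathbf{p}_T'\|_2 + B_{w_3}\|\mathbf{p}_T-\mathbf{p}_T'\|_2.
\end{align}
Feeding the unrolled bound on $\|\mathbf{p}_T-\mathbf{p}_T'\|_2$ into the last term and matching coefficients yields exactly the stated Lipschitz constants $\rho_{W_1^{(i)}}, \rho_{W_2^{(i)}}, \rho_{w_3}, \rho_{w_4}$.

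The main obstacle I expect is bookkeeping: ensuring that all factors of $\sqrt{n}$ arising from the $\ell_2$/$\ell_\infty$ norm conversions on the message vector (which has length $nd_v$, not $n$), together with the forward norm bounds on $\mathbf{p}_t'$, produce precisely the geometric-series coefficients shown in the statement rather than slightly looser powers. In particular, distinguishing the exponent $T-i$ from $T-1$ in the two different geometric sums appearing in $\rho_{W_2^{(i)}}$ requires carefully identifying which power of $\sqrt{n}\,B_{W_2}$ multiplies the weight-perturbation term at iteration $i$ when it is propagated all the way to the output, versus which power multiplies the $\mathbf{p}_{t-1}'$ norm bound inside $A_t$. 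A secondary subtlety is justifying $1$-Lipschitzness of the parity-check operation uniformly (both for the exact $\tanh$ form in~\eqref{eq: belief-propagation-without-approximation} and the min-sum form in~\eqref{eq: min-sum-approximation}); for the $\tanh$ form, this uses that $|\tanh(x/2)| \leq 1$ and $|2\tanh^{-1}(y)|$ has derivative bounded on the open interval, while for min-sum it follows from the fact that the minimum of absolute values is non-expansive.
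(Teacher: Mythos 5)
Your route coincides with the paper's proof in Appendix C essentially step for step: telescope the output layer after dropping the $1$-Lipschitz sigmoid, establish forward norm bounds on the activations, set up the difference recursion $\|\mathbf{v}_t-\mathbf{v}_t'\|_2 \leq A_t + \sqrt{n}\,B_{W_2}\|\mathbf{v}_{t-1}-\mathbf{v}_{t-1}'\|_2$, and unroll the geometric series into the stated $\rho$'s. However, there is one point where your sketch, as written, would not reproduce the stated constants — and it is precisely the spot you flagged as ``bookkeeping'' without resolving it.

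Two distinct parity-layer inequalities are needed, and they carry different constants. For the \emph{forward} bound the paper proves (Lemma \ref{lemma 5: p_t leq v_t}) the exact contraction $\|\mathbf{p}_t\|_2 \leq \|\mathbf{v}_t\|_2$, with constant $1$ and no dimension factor: at a degree-$d_c$ check node, $d_c-1$ of the outgoing min-sum messages equal the minimum incoming magnitude and the remaining one equals the second minimum, so the output $\ell_2$ mass per check never exceeds the input mass. This is why $\rho_{w_3}$ and the second summand of $\rho_{_{W^{(i)}_2}}$ contain pure powers of $B_{W_2}$. Your forward bounds, phrased as products of $B_{W_1}$ and $\sqrt{n}\,B_{W_2}$ ``accumulated through the preceding layers,'' insert a spurious $\sqrt{n}$ into the forward direction; carried through, that turns $\left(B_{W_2}\right)^{T-1}$ into $\left(\sqrt{n}\,B_{W_2}\right)^{T-1}$ in $\rho_{w_3}$, a strictly looser constant than the lemma states. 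Only the \emph{difference} bound $\|\mathbf{p}_t-\mathbf{p}_t'\|_2 \leq \sqrt{n}\,\|\mathbf{v}_t-\mathbf{v}_t'\|_2$ (the paper's \eqref{eq: p_tprime-function-of-p_t-1prime}) picks up a dimension factor, and your worry that a generic $\ell_\infty$-to-$\ell_2$ conversion on the length-$nd_v$ message vector gives $\sqrt{nd_v}$ is legitimate but resolvable by arguing per check node rather than globally: each of the $d_c$ outgoing differences at a check is bounded by the largest incoming difference at that check (min-of-absolute-values composed with the sign product is non-expansive in $\ell_\infty$), so the squared output difference per check is at most $d_c$ times the squared input difference there, giving an overall factor $\sqrt{d_c} \leq \sqrt{n}$. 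With these two corrections your unrolling delivers exactly the stated Lipschitz constants. A minor side point: your justification of $1$-Lipschitzness for the $\tanh$ form via ``derivative bounded on the open interval'' is not right as stated, since $2\tanh^{-1}$ has unbounded derivative near $\pm 1$; the paper sidesteps this entirely by working only with the min-sum form, which is all the lemma requires.
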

  \begin{proof}
  For outputs $f(\bm{\lambda})$ and ${f}^{\prime}(\bm{\lambda})$, respectively we consider the following parameter sets: (a)  $\mathbf{W}^{(1)}_1, \cdots, \mathbf{W}^{(T)}_1, \mathbf{W}^{(1)}_2, \cdots, \mathbf{W}^{(T)}_2, \mathbf{W}_3, \mathbf{W}_4$, and (b) 
  $\mathbf{W}^{\prime(1)}_1, \cdots, \mathbf{W}^{\prime(T)}_1, \mathbf{W}^{\prime(1)}_2, \cdots, \mathbf{W}^{\prime(T)}_2, \mathbf{W}^{\prime}_3, \mathbf{W}^{\prime}_4$.
  For the $j$-{th} output in the NBP decoder, we have that, 
\begin{align}
\left\|f(\bm{\lambda}){[j]} \right. & - \left.f^{\prime}(\bm{\lambda}){[j]}\right\|_{2} =\left\|s\left(\mathbf{W}_4[j,:] \bm{\lambda}[j] + \mathbf{W}_3[j,:] \mathbf{p_{T}}\right)-s\left(\mathbf{W}^{\prime}_4[j,:] \bm{\lambda}[j] +\mathbf{W}^{\prime}_3[j,:] \mathbf{p^{\prime}_{T}}\right)\right\|_{2} \nonumber \\
& \leq \left\| \left(\mathbf{W}_4[j,:] -\mathbf{W}^{\prime}_4[j,:]\right) \bm{\lambda}[j] + \mathbf{W}_3[j,:] \mathbf{p_{T}} -  \mathbf{W}_3^{\prime}[j,:] \mathbf{p_{T}} +  \mathbf{W}_3^{\prime}[j,:] \mathbf{p_{T}} - \mathbf{W}_3^{\prime}[j,:] \mathbf{p^{\prime}_{T}} \right\|_{2} \nonumber  \\
& \leq \left\|\left(\mathbf{W}_4[j,:]- \mathbf{W}^{\prime}_4[j,:] \right) \bm{\lambda}[j]\right\|_{2} + \left\|\left(\mathbf{W}_3[j,:]- \mathbf{W}^{\prime}_3[j,:] \right) \mathbf{p_{T}}\right\|_{2}+\left\|\mathbf{W}^{\prime}_3[j,:]\left(\mathbf{p_{T}}-\mathbf{p^{\prime}_{T}}\right)\right\|_{2} \nonumber  \\
& \leq \left\|\mathbf{W}_4[j,:]- \mathbf{W}^{\prime}_4[j,:]  \right\|_{2} b_{\lambda} + \left\|\mathbf{p_{T}}\right\|_{2}\left\|\mathbf{W}_3[j,:]-\mathbf{W}^{\prime}_3[j,:]\right\|_{2}+B_{w_3}\left\|\mathbf{p_{T}}-\mathbf{p^{\prime}_{T}}\right\|_{2}.
\label{eq: y-lipschitz}
\end{align}    
where, $\left\|\mathbf{W}^{\prime}_3[j,:]\right\|_{2} \leq B_{w_3}$. 
We next find an upper bound $\left\|\mathbf{p_{T}}\right\|_{2}$ as a function of the number of iterations $T$, Lipschitz constants of the activation functions, and spectral norm bounds of the weight matrices of the NBP decoder. 
To further upper bound $\left\|\mathbf{p_{T}}\right\|_{2}$, we know from Lemma \ref{lemma 5: p_t leq v_t} that $\left\|\mathbf{p_{T}}\right\|_{2} \leq \left\|\mathbf{v_{T}}\right\|_{2}$, and therefore we have
\begin{align}
\left\|\mathbf{p_{T}}\right\|_{2} \leq \left\|\mathbf{v_{T}}\right\|_{2} &= \left\| \mathbf{W}^{(T)}_1 \bm{\lambda} + \mathbf{W}^{(T)}_2 \mathbf{p_{T-1}} \right\|_{2}  \nonumber \\
& \leq \left\| \mathbf{W}^{(T)}_1 \bm{\lambda} \right\|_{2}+\left\| \mathbf{W}^{(T)}_2 \mathbf{p_{T-1}}  \right\|_{2} \nonumber\\
& \leq \sqrt{n} B_{W_1} b_{\lambda}+B_{W_2}\left\|\mathbf{p_{T-1}}\right\|_{2}.
\label{eq: v_t-function-of-v_t-1}
\end{align}
where, (a) follows from Talagrand's concentration lemma \cite{ledoux2011probability}.
Applying \eqref{eq: v_t-function-of-v_t-1} recursively across $T$ decoding iterations we have, 
\begin{align}
\left\|\mathbf{p_{T}}\right\|_{2} \leq \left\|\mathbf{v_{T}}\right\|_{2}  \leq \sqrt{n}B_{W_1} b_{\lambda} \sum_{i=0}^{T-2} \left(B_{W_2}\right)^{i} + \left(B_{W_2}\right)^{T-1} \left\|\mathbf{v_{1}}\right\|_{2}.
\label{eq: v_t-function-of-v_1}
\end{align}
\noindent Also, we have $\mathbf{v_{1}} =  \mathbf{W}^{(1)}_1 \bm{\lambda} $.
Therefore, $\left\|\mathbf{v_{1}}\right\|_{2}$ can be upper bounded as
\begin{align}
\left\|\mathbf{v_{1}}\right\|_{2} = \left\|\mathbf{W}^{(1)}_1 \bm{\lambda} \right\|_{2}\leq \sqrt{n} B_{W_1} b_{\lambda}.
\label{eq: v_1-function-of-x}
\end{align}
Substituting \eqref{eq: v_1-function-of-x} in \eqref{eq: v_t-function-of-v_1}, we have
\begin{align}
\hspace{-0.25cm}\left\|\mathbf{p_{T}}\right\|_{2}
& \leq  \sqrt{n}B_{W_1} b_{\lambda} \left( \frac{\left( B_{W_2}\right)^{T-1} -1}{ B_{W_2} - 1} +   \left(B_{W_2}\right)^{T-1} \right).
\label{eq: p_t}
\end{align}

\noindent To upper bound $\left\|\mathbf{p_{T}}-\mathbf{p^{\prime}_{T}}\right\|_{2}$, we express $\left\|\mathbf{p_{T}} - \mathbf{p_{T}^{\prime}}\right\|_{2}$ in terms of $\left\|\mathbf{v_{T}}- \mathbf{v_{T}^{\prime}}\right\|_{2}$ as follows:
\begin{align}
    \left\|\mathbf{p_{T}} - \mathbf{p_{T}^{\prime}}\right\|_{2} \leq \sqrt{n} \left\|\mathbf{v_{T}}- \mathbf{v_{T}^{\prime}}\right\|_{2},
    \label{eq: p_tprime-function-of-p_t-1prime}
\end{align}
where, $\left\|\mathbf{v_{T}}- \mathbf{v_{T}^{\prime}}\right\|_{2}$ can be upper bounded as follows:
\begin{align}
 \left\|\mathbf{v_{T}}-  \mathbf{v_{T}^{\prime}}\right\|_{2} 
&= \left\|  \mathbf{W}^{(T)}_1 \bm{\lambda} +  \mathbf{W}^{(T)}_2 \mathbf{p_{T-1}} - \left(\mathbf{W}_1^{\prime(T)} \bm{\lambda} +  \mathbf{W}^{\prime(T)}_2 \mathbf{p_{T-1}^{\prime}}\right)\right\|_{2} \nonumber \\
%
%
& \leq  \left\|\left(\mathbf{W}^{(T)}_1-\mathbf{W}_1^{\prime{(T)}}\right) \bm{\lambda} \right\|_{2} + \left\|\mathbf{W}^{(T)}_2 \mathbf{p_{T-1}}-\mathbf{W}^{\prime (T)}_2\mathbf{p_{T-1}^{\prime}}\right\|_{2} \nonumber \\
& \leq \sqrt{n} b_{\lambda}\left\|\mathbf{W}^{(T)}_1 - \mathbf{W}_1^{\prime{(T)}}\right\|_{2}   \hspace{-0.05cm}+\hspace{-0.05cm}  \left\|\mathbf{W}^{(T)}_2 \mathbf{p_{T-1}} \hspace{-0.1cm}-\hspace{-0.1cm}\mathbf{W}^{\prime {(T)}}_2 \mathbf{p_{T-1}} \hspace{-0.1cm}+ \hspace{-0.1cm}\mathbf{W}^{\prime {(T)}}_2 \mathbf{p_{T-1}} \hspace{-0.1cm}-\hspace{-0.1cm} \mathbf{W}^{\prime {(T)}}_2 \mathbf{p_{T-1}^{\prime}}\right\|_{2} \nonumber \\
& \leq  \sqrt{n} b_{\lambda} \left\|\mathbf{W}^{(T)}_1-\mathbf{W}_1^{\prime {(T)}} \right\|_{2} \hspace{-0.05cm}+ \hspace{-0.05cm} \left\|\mathbf{p_{T-1}}\right\|_{2}\left\|\mathbf{W}^{(T)}_2 -\mathbf{W}_2^{\prime {(T)}}\right\|_{2} \hspace{-0.05cm} + \hspace{-0.05cm} B_{W_2}\left\|\mathbf{p_{T-1}}-\mathbf{p^{\prime}_{T-1}}\right\|_{2} \nonumber \\
& \leq  \sqrt{n} b_{\lambda} \left\|\mathbf{W}^{(T)}_1 \hspace{-0.05cm} - \hspace{-0.05cm} \mathbf{W}_1^{\prime{(T)}}\right\|_{2}\hspace{-0.05cm}+\hspace{-0.05cm}\left\|\mathbf{p_{T-1}}\right\|_{2}\left\|\mathbf{W}^{(T)}_2 \hspace{-0.05cm} - \hspace{-0.05cm} \mathbf{W}^{\prime{(T)}}_2 \right\|_{2} \hspace{-0.05cm}+\hspace{-0.05cm}\sqrt{n}B_{W_2}\left\|\mathbf{v_{T-1}} \hspace{-0.05cm} - \hspace{-0.05cm}  \mathbf{v_{T-1}^{\prime}}\right\|_{2}.
\label{eq: v_tprime-function-of-v_t-1prime}
\end{align}

\noindent Applying \eqref{eq: v_tprime-function-of-v_t-1prime} recursively across $T$ decoding iterations we have,
\begin{align}
\left\|\mathbf{v_{T}}-\mathbf{v_{T}^{\prime}}\right\|_{2} 
& \leq   \sqrt{n} b_{\lambda} \left(\sum^{T-2}_{i=0} \left(\sqrt{n} B_{W_2} \right)^{i} \left\|\mathbf{W}^{(T-i)}_1 -\mathbf{W}_1^{\prime {(T-i)}}\right\|_{2} \right) \nonumber \\
& + \left(\sum_{i=1}^{T-1} \left( \sqrt{n}  B_{W_2} \right)^{T-1-i} \hspace{-0.05cm} \left\|\mathbf{v_{i}}\right\|_{2} \left\|\mathbf{W}^{(i+1)}_2 \hspace{-0.1cm} - \hspace{-0.1cm} \mathbf{W}_2^{\prime (i+1)}\right\|_{2} \right)  
 + \left( \sqrt{n} B_{W_2}\right)^{T-1} \hspace{-0.1cm} \left\|\mathbf{v_{1}}- \mathbf{v_{1}^{\prime}}\right\|_{2}
  \label{eq: v_tprime-function-of-v_1prime}
\end{align}

\noindent To upper bound $\left\|\mathbf{v_{1}}- \mathbf{v_{1}^{\prime}}\right\|_{2}$, we have that,
\begin{align}
\left\|\mathbf{v_{1}}- \mathbf{v_{1}^{\prime}}\right\|_{2} 
& = \left\|\mathbf{W}^{(1)}_1 \bm{\lambda} - \mathbf{W}^{\prime{(1)}}_1  \bm{\lambda} \right\|_{2}
\nonumber \\
& \leq  \left\|\bm{\lambda} \right\|_{2} \left\| \mathbf{W}^{(1)}_1  - \mathbf{W}^{\prime{(1)}}_1 \right\|_{2}  .
 \label{eq: v_tprime-function-of-x}
\end{align}

\noindent In addition, the term $\sum_{i=1}^{T-1} \left( \sqrt{n}  B_{W_2} \right)^{T-1-i}  \left\|\mathbf{v_{i}}\right\|_{2}$ can be upper bounded as follows,
\begin{align}
     \sum_{i=1}^{T-1} \left( \sqrt{n}  B_{W_2} \right)^{T-1-i}  & \left\|\mathbf{v_{i}}\right\|_{2}  \leq \sum_{i=1}^{T-1} \left(\sqrt{n}  B_{W_2} \right)^{T-1-i} \hspace{-0.1cm} \times  \left((i-1) \sqrt{n} B_{W_1} b_{\lambda} +\sqrt{n} b_{\lambda} B_{W_1} \left(B_{W_2}\right)^{i-1}  \right) \nonumber \\ 
    & \hspace{-0.7cm} \leq T \sqrt{n} B_{W_1} b_{\lambda}  \frac{\left(\sqrt{n}  B_{W_2} \right)^{T-1}-1}{\sqrt{n}  B_{W_2}-1} + \sqrt{n} b_{\lambda}   B_{W_1}  (B_{W_2})^{T-2}     \sum_{i=1}^{T-1} (\sqrt{n})^{T-1-i} \nonumber \\ 
    & \hspace{-0.7cm} \leq T \sqrt{n} B_{W_1} b_{\lambda} \frac{\left(\sqrt{n}  B_{W_2} \right)^{T-1}-1}{\sqrt{n} B_{W_2}-1}   +  \sqrt{n} b_{\lambda}   B_{W_1}   (B_{W_2})^{T-2}  \frac{(\sqrt{n})^{T-1}-1}{\sqrt{n}-1}.
     \label{eq: v_i-upperbound}
\end{align}

\noindent Substituting \eqref{eq: v_tprime-function-of-x} and \eqref{eq: v_i-upperbound} in \eqref{eq: v_tprime-function-of-v_1prime}, we have
\begin{align}
& \left\|\mathbf{v_{T}}-\mathbf{v_{T}^{\prime}}\right\|_{2} 
\leq  \sqrt{n} b_{\lambda} \left(\sum^{T-1}_{i=0} \left(\sqrt{n} B_{W_2} \right)^{i} \left\|\mathbf{W}^{(T-i)}_1 -\mathbf{W}_1^{\prime {(T-i)}}\right\|_{2} \right) \nonumber \\
& +  \left( T \sqrt{n} b_{\lambda} B_{W_1} \frac{\left(\sqrt{n}  B_{W_2} \right)^{T-1}-1}{\sqrt{n}  B_{W_2}-1} + \sqrt{n} b_{\lambda}   B_{W_1}    {\hspace{-0.05cm} \left(\hspace{-0.05cm}B_{W_2}\hspace{-0.05cm}\right)\hspace{-0.05cm}}^{T-2}     \frac{(\sqrt{n})^{T-1}-1}{\sqrt{n}-1}\right) \sum^{T}_{i=2}  \left\|\mathbf{W}^{(i)}_2-\mathbf{W}^{\prime{(i)}}_2\right\|_{2}
\label{eq: v_tprime-function-of-uvw}
\end{align}

\noindent Substituting \eqref{eq: v_tprime-function-of-uvw} in \eqref{eq: p_tprime-function-of-p_t-1prime} we have that, 
\begin{align}
     & \left\|\mathbf{p_{T}} - \mathbf{p_{T}^{\prime}}\right\|_{2} \leq   n b_{\lambda} \left(\sum^{T-1}_{i=0} \left(\sqrt{n} B_{W_2} \right)^{i} \left\|\mathbf{W}^{(T-i)}_1 -\mathbf{W}_1^{\prime {(T-i)}}\right\|_{2} \right)
\nonumber \\
& + \left( T n b_{\lambda} B_{W_1} \frac{\left(\sqrt{n}  B_{W_2} \right)^{T-1}-1}{\sqrt{n}  B_{W_2}-1} +  n b_{\lambda}   B_{W_1}    \left(B_{W_2}\right)^{T-2}   \frac{(\sqrt{n})^{T-1}-1}{\sqrt{n}-1}\right)  \sum^{T}_{i=2} \left\|\mathbf{W}^{(i)}_2-\mathbf{W}^{\prime (i)}_2\right\|_{2}
\label{eq: v_tprime-function-of-uvw-2}
\end{align}

\noindent Using \eqref{eq: p_t} and \eqref{eq: v_tprime-function-of-uvw-2} in \eqref{eq: y-lipschitz} we obtain
\begin{align}
& \left\|f(\bm{\lambda}){[j]} - f^{\prime}(\bm{\lambda}){[j]}\right\|_{2} 
 \leq   n  b_{\lambda} B_{w_3} \left(\sum^{T-1}_{i=0} \left(\sqrt{n} B_{W_2} \right)^{i} \left\|\mathbf{W}^{(T-i)}_1 -\mathbf{W}_1^{\prime {(T-i)}}\right\|_{2} \right) \nonumber \\
& +   n b_{\lambda} B_{W_1} B_{w_3} \left(  T \frac{\left(\sqrt{n}  B_{W_2} \right)^{T-1}-1}{\sqrt{n}  B_{W_2}-1} + \left(B_{W_2}\right)^{T-2} \frac{(\sqrt{n})^{T-1}-1}{\sqrt{n}-1}\right)  \sum^{T}_{i=2} \left\|\mathbf{W}^{(i)}_2-\mathbf{W}^{\prime (i)}_2 \right\|_{2}
\nonumber \\ &  +   \sqrt{n}B_{W_1} b_{\lambda} \left( \frac{\left( B_{W_2}\right)^{T-1}-1}{ B_{W_2} - 1} +   \left(B_{W_2}\right)^{T-1} \right)  \times \left\|\mathbf{W}_3[j,:]-\mathbf{W}^{\prime}_3[j,:]\right\|_{2}\nonumber \\ & +  b_{\lambda}\left\|\mathbf{W}_4[j,:]- \mathbf{W}^{\prime}_4[j,:]  \right\|_{2}.
\end{align} 

\noindent This completes the proof of Lemma \ref{Lemma-1: Lipschitz continuity of f}.
  \end{proof}
  \noindent Next, we state and prove Lemma \ref{lemma 5: p_t leq v_t} in which we show that for any layer $t$, the value of $\|\mathbf{p_t}\|_2$ is less than $\|\mathbf{v_t}\|_2$. This inequality was used to obtain the result in Lemma \ref{lemma-1: f is lipschitz}. 
  \begin{lemma}
Given the output of the parity check layer $\mathbf{p_t}$ defined by the min-sum operation as:
\begin{align}
    \hspace{-0.22cm} \mathbf{p_t}[\{l,m\}] = \hspace{-0.5cm} \underset{l^{\prime} \in \mathcal{P}(m)\backslash l}{\prod} \hspace{-0.3cm}  sign({\mathbf{v_t}[\{l^{\prime} ,m\}]}) \underset{l^{\prime} \in \mathcal{P}(m)\backslash l}{\min} |{\mathbf{v_t}[\{l^{\prime} ,m\}]}|
\end{align}
\noindent Then, the norm $\|\mathbf{p_t}\|_2$ is always bounded by the norm $\|\mathbf{v_t}\|_2$, i.e., $\|\mathbf{p_t}\|_2 \leq \|\mathbf{v_t}\|_2$. 
\label{lemma 5: p_t leq v_t}
\end{lemma}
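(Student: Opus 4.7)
The plan is to exploit the Tanner-graph structure of the index set of $\mathbf{v_t}$ and $\mathbf{p_t}$: both vectors are indexed by edges $\{l,m\}$, so their squared $\ell_2$ norms can be reorganized as sums over parity check nodes $m$. I would write
\begin{align}
\|\mathbf{p_t}\|_2^2 = \sum_{m} \sum_{l\in \mathcal{P}(m)} |\mathbf{p_t}[\{l,m\}]|^2,
\qquad
\|\mathbf{v_t}\|_2^2 = \sum_{m} \sum_{l\in \mathcal{P}(m)} |\mathbf{v_t}[\{l,m\}]|^2,
\end{align}
and then prove the stronger, local inequality
$\sum_{l\in \mathcal{P}(m)} |\mathbf{p_t}[\{l,m\}]|^2 \leq \sum_{l\in \mathcal{P}(m)} |\mathbf{v_t}[\{l,m\}]|^2$ for every parity check node $m$; summing over $m$ then yields the claim.

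To establish the local inequality at a fixed $m$, I would abbreviate $a_l := |\mathbf{v_t}[\{l,m\}]|$ for $l\in \mathcal{P}(m)$ (so there are $d_c$ such values) and observe that by the definition of the min-sum update, $|\mathbf{p_t}[\{l,m\}]| = \min_{l' \in \mathcal{P}(m)\setminus l} a_{l'}$ (the sign product does not affect the magnitude). The key combinatorial observation is that the leave-one-out minimum behaves in one of two ways: if the overall minimum $a_* := \min_{l'} a_{l'}$ is attained by at least two indices, then $\min_{l'\neq l} a_{l'} = a_*$ for every $l$; if $a_*$ is attained by a unique index $l_0$, then $\min_{l'\neq l} a_{l'} = a_*$ for $l\neq l_0$ and $\min_{l'\neq l_0} a_{l'} = a_{**}$, the second smallest value.

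In the first case, $\sum_l |\mathbf{p_t}[\{l,m\}]|^2 = d_c\, a_*^2 \leq \sum_l a_l^2$ since each $a_l \geq a_*$. In the (harder but still elementary) second case, $\sum_l |\mathbf{p_t}[\{l,m\}]|^2 = (d_c-1)a_*^2 + a_{**}^2$, so it suffices to show
\begin{align}
(d_c-2)a_*^2 + a_{**}^2 \;\leq\; \sum_{l\neq l_0} a_l^2,
\end{align}
which follows because the right-hand side is a sum of $d_c-1$ terms, each at least $a_{**}\geq a_*$, so it is bounded below by $a_{**}^2 + (d_c-2)a_{**}^2 \geq a_{**}^2 + (d_c-2) a_*^2$. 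I do not anticipate any real obstacle; the only subtlety is handling ties in the minimum correctly (which motivates the case split) and recognizing that one cannot compare the two sums termwise, only after aggregation over $l\in \mathcal{P}(m)$, since the ``leave-one-out'' minimum at the unique argmin can exceed the value of $\mathbf{v_t}$ at that edge.
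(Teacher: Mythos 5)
Your proof is correct and takes essentially the same approach as the paper's: decompose the squared norms check node by check node, observe that the leave-one-out minimum yields $d_c-1$ copies of the smallest magnitude and one copy of the second smallest, establish the local inequality $\sum_{l\in\mathcal{P}(m)}|\mathbf{p_t}[\{l,m\}]|^2 \leq \sum_{l\in\mathcal{P}(m)}|\mathbf{v_t}[\{l,m\}]|^2$ at each check node, and sum over $m$. Your explicit case split for ties in the minimum is in fact slightly more careful than the paper's argument, which simply assumes a strict ordering of the magnitudes without loss of generality.
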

\begin{proof}
It is straightforward to verify that for any decoding iteration the norm of output of parity check layer is always lower than the variable node layer.
For $n$ length code and $k$ length message, let us consider parity check matrix $\mathbf{H} \in \{0,1\}^{n \times (n-k)}$ 
with variable node degree $d_v$, and parity check node degree $d_c$.

\noindent We denote the corresponding Tanner graph as $\mathcal{G} \in \{\mathcal{V},\mathcal{P}, \mathcal{E} \}$, where $\mathcal{V} = \{v_1,\dots,v_n \}$, $\mathcal{P} = \{p_1,\dots,p_{n-k} \}$, and $\mathcal{E} = \{e_1,\dots,e_{nd_v} \}$.
Without loss of generality, let us consider parity check node $p_1$ in $\mathcal{G}$ such that $\mathcal{P}(p_1) = \{v_1, v_2, \dots, v_{d_c} \}$, where $\{v_1, v_2, \dots, v_{d_c} \} \subseteq \mathcal{V}$.
Therefore, in the Tanner graph, $d_c$ is the number of incoming messages to $p_1$; which translates to $d_c$ hidden nodes in the variable check layer $\mathbf{v_t}$ for any $1 \leq  t \leq T$ in the NBP decoder (whose architecture was described in Section \ref{sec: main-results}). 
Similarly, $d_c$ is the number of outgoing messages from $p_1$; this translates to $d_c$ hidden nodes in the parity check layer $\mathbf{p_t}$ in the NBP decoder.

\noindent The message passed $\mathbf{p_t} [\{1, 1\}]$ from $p_1$ to $v_1$ in the NBP decoder is given as, 
\begin{align}
   \mathbf{p_t} [\{1, 1\}] & =   \underset{l^{\prime} \in \mathcal{P}(p_1)\backslash v_1}{\prod} \hspace{-0.3cm}  sign({\mathbf{v_t}[\{l^{\prime} ,1\}]}) \underset{l^{\prime} \in \mathcal{P}(p_1)\backslash v_1}{\min} |{\mathbf{v_t}[\{l^{\prime} ,1\}]}| \nonumber \\
     =  & sign({\mathbf{v_t}[\{{2} ,1\}]}) \cdot sign({\mathbf{v_t}[\{{3} ,1\}]}) \cdots sign({\mathbf{v_t}[\{{d_c} ,1\}]})  \nonumber \\ & \hspace{0.25cm}\times \min\left(|{\mathbf{v_t}[\{{2} ,1\}]}|  ,|{\mathbf{v_t}[\{{3} ,1\}]}| ,\cdots ,|{\mathbf{v_t}[\{{d_c} ,1\}]}|\right) \nonumber \\
     \stackrel{(a)}{=}  & sign({\mathbf{v_t}[\{{2} ,1\}]}) \cdot sign({\mathbf{v_t}[\{{3} ,1\}]}) \cdots sign({\mathbf{v_t}[\{{d_c} ,1\}]})  \times |{\mathbf{v_t}[\{{2} ,1\}]}| .
     \label{eq: lemma-2-eq-1}
\end{align}

\noindent 
where, step (a) follows from our assumption that $|\mathbf{v_t} [\{1,1\}]| < |\mathbf{v_t} [\{{2}, 1\}]| \cdots < |\mathbf{v_t} [\{{d_c}, 1\}]|$, without loss of generality. 
Following similar steps as in \eqref{eq: lemma-2-eq-1}, for the set $\{ v_i | 2 \leq i \leq d_c\} $, we have that,
\begin{align}
   \mathbf{p_t} [\{i, 1\}] = \hspace{-0.5cm}  \underset{l^{\prime} \in \mathcal{P}(p_1)\backslash v_i}{\prod} \hspace{-0.3cm}  sign({\mathbf{v_t}[\{l^{\prime} ,1\}]}) \times |{\mathbf{v_t}[\{1 ,1\}]}|
\end{align}

\noindent This implies that outputs of $d_c - 1$ nodes in $\mathbf{p_t}$ correspond to the minimum absolute value of the $d_c$ incoming messages to $p_1$. 
Therefore, for the parity check equation $p_1$, we have, 
\begin{align}
    \left( \sum_{i = 1}^{d_c} |\mathbf{p_t} [\{i,1\}]|^2\right)^{\frac{1}{2}}  &\leq 
    \left((d_c - 1) |\mathbf{v_t} [\{1,1\}]|^2 + |\mathbf{v_t} [\{{2}, 1\}]|^2\right)^{\frac{1}{2}}.
\end{align}
Following similar steps for parity checks $p_2, \cdots p_{n-k}$, we can conclude that,
\begin{align}
    \|\mathbf{p_t}\|_2 \leq \|\mathbf{v_t}\|_2,
    \label{eq: lemma-2-conclusion}
\end{align}
where, equality in \eqref{eq: lemma-2-conclusion} is satisfied when the output of the hidden nodes in $\mathbf{v_t}$ have same absolute values. 
This completes the proof of Lemma \ref{lemma 5: p_t leq v_t}.
\end{proof}

\section{Bound on covering number of sparse matrices}
\label{sec: Lemma-3-Proof}
\noindent In this section, we derive an upper bound on the covering number of sparse matrices as a function of its rows, columns, and spectral norm bound.   
  \begin{lemma} Let $\mathcal{W} = \{\mathbf{W} \in \mathbb{R}^{r \times c}: \|\mathbf{W}\|_2 \leq B_W \text{ and }\|\mathbf{W}[i,:]\|_0 = q ,\text{ }1 \leq i \leq r \}$ be the  space of matrices with its spectral norm bounded by a constant $B_W$, and exactly $q$ non-zero entries in each row.
  Then, its  covering number $\mathcal{N}(\mathbf{W}, \epsilon, \|\cdot\|_F)$ with respect to the Frobenius norm can be upper bounded as follows:  
  \begin{align}
          \mathcal{N}(\mathcal{W}, \epsilon, \|\cdot\|_F) \leq {\left(1 + \frac{2\min{(\sqrt{r},\sqrt{c})} B_W}{\epsilon}\right)^{qr}}.
  \end{align}
  where, $\epsilon > 0$ is a constant. 
  
  \label{Lemma-2: Covering number matrices}
  \end{lemma}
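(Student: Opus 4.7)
My plan is to reduce the problem to covering a Euclidean ball in a low-dimensional space. Since the sparsity pattern of matrices in $\mathcal{W}$ is fixed (as determined by the Tanner graph structure in the NBP application), each $\mathbf{W}\in\mathcal{W}$ is uniquely represented by the vector $\mathbf{w}\in\mathbb{R}^{qr}$ obtained by stacking its $q$ non-zero entries from each of the $r$ rows in a canonical order. Under this identification, $\|\mathbf{W}\|_F=\|\mathbf{w}\|_2$, so the Frobenius covering number of $\mathcal{W}$ equals the $L_2$ covering number of the image set $\mathcal{W}'\subseteq\mathbb{R}^{qr}$, and an $\epsilon$-cover of $\mathcal{W}'$ in $\mathbb{R}^{qr}$ lifts (by re-inserting the zeros) to an $\epsilon$-cover of $\mathcal{W}$ in Frobenius norm.

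Next, I would control the radius of a ball containing $\mathcal{W}'$. Using the standard inequality $\|\mathbf{W}\|_F\leq\sqrt{\mathrm{rank}(\mathbf{W})}\cdot\|\mathbf{W}\|_2\leq\sqrt{\min(r,c)}\cdot\|\mathbf{W}\|_2$ (which follows from $\|\mathbf{W}\|_F^2=\sum_i\sigma_i^2(\mathbf{W})$ with at most $\min(r,c)$ non-zero singular values, each bounded by $\|\mathbf{W}\|_2$), combined with the spectral constraint $\|\mathbf{W}\|_2\leq B_W$, I obtain $\|\mathbf{w}\|_2\leq\sqrt{\min(r,c)}\,B_W$. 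Hence $\mathcal{W}'$ is contained in the Euclidean ball $B_2(R)\subseteq\mathbb{R}^{qr}$ of radius $R=\sqrt{\min(r,c)}\,B_W$.

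The final step invokes the classical volumetric covering bound: the $\epsilon$-covering number of a Euclidean ball of radius $R$ in $\mathbb{R}^d$ with respect to the $L_2$ norm satisfies $\mathcal{N}(B_2(R),\epsilon,\|\cdot\|_2)\leq (1+2R/\epsilon)^d$. This is standard and follows from the volume-ratio argument, noting that a maximal $\epsilon$-packing inside $B_2(R)$ has disjoint $(\epsilon/2)$-balls contained in $B_2(R+\epsilon/2)$, and recalling that the packing number upper bounds the covering number. Substituting $d=qr$ and $R=\sqrt{\min(r,c)}\,B_W$ yields the claimed bound $\bigl(1+2\sqrt{\min(r,c)}\,B_W/\epsilon\bigr)^{qr}$.

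There is no real technical obstacle — the argument is a clean reduction. The subtle point is to exploit the \emph{fixed sparsity pattern} to embed $\mathcal{W}$ into a $qr$-dimensional subspace rather than the ambient $\mathbb{R}^{r\times c}$; this is what keeps the exponent at $qr$ instead of $rc$. The quantitative gain from the Frobenius-to-spectral inequality via the rank bound is what produces the $\sqrt{\min(r,c)}$ factor in the radius rather than a much larger quantity, and this is precisely the input needed to make Theorem~\ref{Theorem-1: dudley-entropy-intergral} scale only polynomially in the code parameters.
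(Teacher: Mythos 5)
Your proposal is correct and follows essentially the same route as the paper's proof: both vectorize the sparse matrix into $\mathbb{R}^{qr}$ via the fixed support pattern (so that $\|\mathbf{W}\|_F$ becomes an $L_2$ norm), bound the radius by $\sqrt{\min(r,c)}\,B_W$ using the Frobenius-to-spectral inequality, and conclude with the packing/volume-ratio argument giving $\left(1+\frac{2R}{\epsilon}\right)^{qr}$. Your explicit remark that the sparsity pattern must be fixed (rather than merely each row having $q$ non-zeros) is a point the paper leaves implicit, and it is indeed what justifies the bijection onto $\mathbb{R}^{qr}$.
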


\begin{proof}
We consider $\Psi: \mathbb{R}^{r \times c} \rightarrow \mathbb{R}^{qr \times 1}$, a bijective mapping such that $\Psi(\mathbf{W}) \in \mathbb{R}^{q r  \times 1}$ is a vector of all non-zero entries in $\mathbf{W} \in \mathcal{W}$. 
The vector space induced by the mapping $\Psi$ is defined as $\mathcal{\psi}$ such that $\psi(\mathcal{W}) = \{\Psi(\mathbf{W}): \mathbf{W} \in \mathcal{W} \}$.
Therefore, we also have $\|\mathbf{W}\|_{F} = \|\Psi(\mathbf{W})\|_{2}$.

 \noindent We construct $\mathcal{C}(\psi(\mathcal{W}), \epsilon, \|\cdot\|_2)$, a $\epsilon$-covering of $\psi(\mathcal{W})$ under the $L_2$ norm, and denote the corresponding covering number by $\mathcal{N}(\psi(\mathcal{W}), \epsilon, \|\cdot\|_2)$.
Similarly, we construct $\mathcal{C}(\mathcal{W}, \epsilon, \|\cdot\|_F)$, a $\epsilon$-covering of $\mathcal{W}$ under $\|\cdot\|_F$, and its covering number is denoted as $\mathcal{N}(\mathcal{W}, \epsilon, \|\cdot\|_F)$.  
In what follows, we have,
\begin{align}
    \mathcal{N}(\mathcal{W}, \epsilon, \|\cdot\|_F) \stackrel{(a)}{\leq} \mathcal{N}(\psi(\mathcal{W}), \epsilon, \|\cdot\|_2) \stackrel{(b)}{\leq}  \mathcal{M}(\psi(\mathcal{W}), \epsilon, \|\cdot\|_2).
    \label{eq: covering-no-matrix-vector}
\end{align}
The inequality (a) in \eqref{eq: covering-no-matrix-vector} follows from the fact that $\Psi$ is a bijective map, and that $\Psi^{-1} \mathcal{C}(\psi(\mathcal{W}), \epsilon, \|\cdot\|_2)$ is also a $\epsilon$-cover of the matrices in $\mathcal{W}$.
The inequality (b) in \eqref{eq: covering-no-matrix-vector} follows from the definition of packing $\mathcal{P}(\psi(\mathcal{W}), \epsilon, \|\cdot\|_2)$ and the packing number $\mathcal{M}(\psi(\mathcal{W}), \epsilon, \|\cdot\|_2)$. 
In other words, let $\mathcal{P}(\psi(\mathcal{W}), \epsilon, \|\cdot\|_2)$ be the maximal packing.
Suppose for some $\mathbf{W}_i \in \mathcal{W}$ there exists $\Psi(\mathbf{U}) \in \mathcal{W}\backslash \mathcal{P}(\psi(\mathcal{W}), \epsilon, \|\cdot\|_2)$ such that,
\begin{align}
    \|\Psi(\mathbf{W}_i) - \Psi(\mathbf{U})\|_2 \geq \epsilon.
\end{align}
Then, $\mathcal{P}(\psi(\mathcal{W}), \epsilon, \|\cdot\|_2)$ being a maximal packing is a contradiction. 
Therefore, $\mathcal{P}(\psi(\mathcal{W}), \epsilon, \|\cdot\|_2)$ is also an $\epsilon$-cover of $\psi(\mathcal{W})$ and it follows that its cardinality (i.e., packing number) denoted by  $\mathcal{M}(\psi(\mathcal{W}), \epsilon, \|\cdot\|_2)$ is greater than $\mathcal{N}(\psi(\mathcal{W}), \epsilon, \|\cdot\|_2)$.
 In the next step, we upper bound $\mathcal{M}(\psi(\mathcal{W}), \epsilon, \|\cdot\|_2)$.
To this end, we make use of the definitions \ref{definition-1: covering-number}, \ref{definition-2: packing-number} to determine this upper bound; and it follows from these definitions that the balls need not be disjoint for an $\epsilon$-cover $\mathcal{C}(\psi(\mathcal{W}), \epsilon, \|\cdot\|_2)$, and it must be disjoint for $\epsilon$-packing $\mathcal{P}(\psi(\mathcal{W}), \epsilon, \|\cdot\|_2)$. Therefore, $\forall \mathbf{w}_i \in \mathcal{P}(\psi(\mathcal{W}), \epsilon, \|\cdot\|_2)$ we have that, 
\begin{align}
    \bigcup_{i=1}^{\mathcal{P}(\psi(\mathcal{W}), \epsilon, \|\cdot\|_2)} \mathcal{B} (\mathbf{w}_i, \frac{\epsilon}{2}) \subset \mathcal{B} (0, R + \frac{\epsilon}{2}).
    \label{eq: upper-bound-packing-eq0}
\end{align}
where, the radius $R =  \underset{\mathbf{W} \in \mathcal{W}}{\max} \|\Psi(\mathbf{W})\|_{2}$. Taking volume on both sides in \eqref{eq: upper-bound-packing-eq0}, we have that, 
\begin{align}
    & \text{vol} \left( \bigcup_{i=1}^{\mathcal{P}(\psi(\mathcal{W}), \epsilon, \|\cdot\|_2)} \mathcal{B} (\mathbf{w}_i, \frac{\epsilon}{2}) \right) \leq \text{vol} \left(\mathcal{B} (0, R + \frac{\epsilon}{2})\right) \nonumber \\
    \implies & 
    \hspace{-0.5cm}\sum_{i=1}^{\mathcal{P}(\psi(\mathcal{W}), \epsilon, \|\cdot\|_2)}\text{vol} \left(  \mathcal{B} (\mathbf{w}_i, \frac{\epsilon}{2}) \right) \leq \text{vol} \left(\mathcal{B} (0, R + \frac{\epsilon}{2})\right).
    \label{eq: upper-bound-packing-eq1}
\end{align}
To find the value of the radius $R$, we bound the $L_2$ norm of $\Psi(\mathbf{W})$ in terms of the spectral norm of the sparse matrix $\mathbf{W}$, and we have that, 
\begin{align}
     \|\Psi(\mathbf{W})\|_{2} = \|\mathbf{W}\|_{F} & \leq  \min{(\sqrt{r},\sqrt{c})} \|\mathbf{W}\|_2 \leq  \min{(\sqrt{r},\sqrt{c})} B_W.
\end{align}
By considering the $L_2$ ball $\mathcal{B}(0,R)$, where radius $R = \min{(\sqrt{r},\sqrt{c})} B_W$, and from \eqref{eq: upper-bound-packing-eq1} we obtain an upper bound on $\mathcal{M}(\psi(\mathcal{W}), \epsilon, \|\cdot\|_2)$ as follows,
\begin{align}
    \mathcal{M}(\psi(\mathcal{W}), \epsilon, \|\cdot\|_2) & \leq \frac{\left(R + \frac{\epsilon}{2}\right)^{qr}}{\left(\frac{\epsilon}{2}\right)^{qr}}
    \leq {\left(1 + \frac{2\min{(\sqrt{r},\sqrt{c})} B_W}{\epsilon}\right)^{qr}}.
\end{align}
Therefore, we have, 
\begin{align}
    \mathcal{N}(\mathcal{W}, \epsilon, \|\cdot\|_F) &{\leq} \mathcal{M}(\psi(\mathcal{W}), \epsilon, \|\cdot\|_2) \leq {\left(1 + \frac{2\min{(\sqrt{r},\sqrt{c})} B_W}{\epsilon}\right)^{qr}}. 
\end{align}

\noindent This completes the proof of Lemma \ref{Lemma-2: Covering number matrices}.
\end{proof}

\section{Proof of Theorem \ref{Corollary-1: irregular-parity-check-mat}}
\label{sec: Corollary-1-Proof}

\noindent We consider an irregular parity check matrix $\mathbf{H} \in \mathbb{R}^{(n-k) \times n}$  where $d_{v_i}$ is the variable node degree of the $i$-{th} bit in the codeword, and $d_{c_j}$ is the parity check node degree of the $j$-{th} parity check equation.
The NBP decoder corresponding to such this irregular parity check matrix is characterized by the weight matrices $\mathbf{W}^{(1)}_1$, $\mathbf{W}^{(2)}_1, \cdots, \mathbf{W}^{(T)}_1$, $\mathbf{W}^{(2)}_2$, $\mathbf{W}^{(2)}_2, \cdots, \mathbf{W}^{(T)}_2$, $\mathbf{W}_3$, $\mathbf{W}_4$.

\noindent Here, for $t \in \{1,\cdots, T\}$, and $\theta = \sum\limits_{i = 1}^{n} d_{v_i}$, we have that $\mathbf{W}^{(t)}_1 \in \mathbb{R}^{ \theta \times n}$, $\mathbf{W}^{(t)}_2 \in \mathbb{R}^{ \theta \times \theta}$, $\mathbf{W}_3 \in \mathbb{R}^{ n \times \theta}$, and 
$\mathbf{W}_4 \in \mathbb{R}^{ n \times n}$.
For any value of $t$, the weight matrix $\mathbf{W}^{(t)}_1$ has one non-zero entry in every row, and $d_{v_i}$ non-zero entries in the $i$-th column for integer values $i \in [n]$. 
In the weight matrix $\mathbf{W}^{(t)}_2$, the $i$-th bit in the codeword with variable node degree $d_{v_i}$ corresponds to $d_{v_i}$ rows and $d_{v_i}$ columns, and these rows and columns each have exactly $d_{v_{i}}-1$ non-zero entries.

\noindent The $i$-{th} row in the weight matrix $\mathbf{W}_3$ corresponds to the $i$-{th} bit in the codeword, and has exactly $d_{v_i}$ non-zero entries.
Lastly, the weight matrix $\mathbf{W}_4 \in \mathbb{R}^{n \times n}$ is a a diagonal matrix.
 If the weights in the NBP decoder are bounded in $[-w,w]$, then for any $t \in \{1,2, \cdots, T\}$, the spectral norm of the weight matrices $\mathbf{W}^{(t)}_1$, and $\mathbf{W}^{(t)}_2$ can be bounded as follows: 
\begin{align}
   & B_{W_1} = \|\mathbf{W}^{(t)}_1\|_2 \leq w \sqrt{ \mathop{\max}_{i} d_{v_i}} ,   \hspace{1.8cm}
    B_{W_2} = \|\mathbf{W}^{(t)}_2\|_2 \leq w \left(\mathop{\max}_{i} d_{v_i} - 1 \right) 
   \label{eq: spectral norm bounds w1-w2-irregular}
\end{align}
The $L_2$ norm bounds on the row vector in matrices $\mathbf{W}_3$ and $\mathbf{W}_4$ are as follows:
\begin{align}
   & B_{w_3} = \|\mathbf{W}_3[j,:]\|_2 \leq w \sqrt{d_{v_j}} \leq w \sqrt{ \mathop{\max}_{i} d_{v_i}},   \hspace{1cm}
    B_{w_4} = \|\mathbf{W}_4[j,:]\|_2 \leq w
   \label{eq: spectral norm bounds w3-w4-irregular}
\end{align}

\noindent For irregular parity check matrices, the upper bound \eqref{eq: covering-number-matrices-prod} becomes,
\begin{align}
     \mathcal{N}(\mathcal{F}_T[j], {\epsilon}, \|\cdot\|_2)
     & \leq 
      {\prod_{i = 1}^{T} \left(1 + \frac{ (4T + 2) \sqrt{n} B_{W_1}\rho_{_{W^{(i)}_1}}}{\epsilon}\right)^{\mathop{\sum}\limits_{h=1}^{n} d_{v_h}}}   \nonumber \\&  \times  {\prod_{i = 2}^{T} \left(1 + \frac{ (4T + 2) \sqrt{nd_v} B_{W_2}\rho_{_{W^{(i)}_2}}}{\epsilon}\right)^{\mathop{\sum}\limits_{h=1}^{n} (d_{v_h} - 1)d_{v_h}}} 
     \nonumber \\& \times {\left(1 + \frac{ (4T + 2)  B_{w_3}\rho_{_{w_3}}}{\epsilon}\right)^{\max\limits_{h} d_{v_h}}}  \times  {\left(1 + \frac{ (4T + 2) B_{w_4}\rho_{_{w_4}}}{\epsilon}\right)}. 
     \label{eq: covering-number-matrices-prod-irregular}
\end{align}
 We can further upper bound  $\mathcal{N}(\mathcal{F}_T[j], {\epsilon}, \|\cdot\|_2)$ in  \eqref{eq: covering-number-matrices-prod-irregular} as follows.
\begin{align}
\mathcal{N}(\mathcal{F}_T[j], {\epsilon}, \|\cdot\|_2) \leq \left(1 + \frac{(4T + 2) \sqrt{nd_v}\left(c_1 + c_2 \right)}{\epsilon}  \right)^{(T + 1) \sum\limits_{h = 1}^{n} d^2_{v_h}}. 
\label{eq: covering-number-Ft-using-matrices-irregular}
\end{align}
where, $c_1 = nT B_{W_1} B_{W_2} B_{w_3} b_{\lambda} \frac{( \sqrt{n} B_{W_2})^{T-1}-1}{\sqrt{n} B_{W_2}-1}$,  $c_2 = n B_{W_1} B_{w_3} b_{\lambda} \left(\sqrt{n} B_{W_2} \right)^{T-1}$. 
Substituting the values of the spectral norm bounds in \eqref{eq: spectral norm bounds w1-w2-irregular} and \eqref{eq: spectral norm bounds w3-w4-irregular} in \eqref{eq: covering-number-Ft-using-matrices-irregular} and assuming $n >> T$, we obtain,
\begin{align}
\mathcal{N}(\mathcal{F}_T[j], {\epsilon}, \|\cdot\|_2) \leq \left(1 + \frac{ (4T+2)\sqrt{n \max\limits_{i} d_{v_i}}}{\epsilon} b_{\lambda} (T+1) \left(\sqrt{n}w\max\limits_{i} d_{v_i}\right)^{T+1} \right)^{(T + 1) \sum\limits_{h = 1}^{n} d^2_{v_h}}
\end{align}

\noindent For the NBP decoder corresponding to an irregular parity check matrices, the bit-wise Rademacher complexity is upper bounded as,
\begin{align}
    R_m(\mathcal{F}_{T}[j]) \leq   \frac{4}{{m}} + \frac{12}{m} \sqrt{{m(T + 1) \sum_{h = 1}^{n} d^2_{v_h}} \log \hspace{-0.1cm}\left(\hspace{-0.1cm} {8 \sqrt{mn \max\limits_{i} d_{v_i}}} b_{\lambda} (T+1)^2 \hspace{-0.1cm} \left(\hspace{-0.1cm}\sqrt{n}w \max\limits_{i} d_{v_i}\hspace{-0.1cm}\right)^{T+1} \right)\hspace{-0.05cm}}
\end{align}

\noindent Subsequently, we have the upper bound on the true risk $\mathcal{R}_\text{BER}(f)$ as,
\begin{align}
    \mathcal{R}_\text{BER}(f)
    \leq \mathcal{\hat{R}}_{\text{BER}}(f)+ {12} \sqrt{\frac{\sum\limits_{h = 1}^{n} d^2_{v_h}({T+1})^2}{m} \log \left(8{\sqrt{mn}w\max_{i} d_{v_i} b_{\lambda}}\right)}  + \frac{4}{{m}} +\sqrt{\frac{\log(1/\delta)}{2m}}.
\end{align}

\section{Proof of Theorem \ref{Proposition-2: unbounded-input-channel-snr}}
\label{sec: Proposition-2-Proof}

\noindent From law of total expectations, we have that,
 \begin{align}
  \operatorname{Pr}(l_\text{BER}&(f(\bm{\lambda}),\mathbf{x}) > 0)   = \mathop{\mathbb{E}}\left[l_\text{BER}(f(\bm{\lambda}),\mathbf{x}) \bigg{|} \forall i \in [n] \text{ s.t. } |\bm{\lambda}[i]| \leq b_{\lambda}\right]   \times \underbrace{\operatorname{Pr}\left(\forall i \in [n] \text{ s.t. } |\bm{\lambda}[i]| \leq b_{\lambda}\right)}_{\text {prob. that input is bounded}} \nonumber \\ &+ \mathop{\mathbb{E}}\left[l_\text{BER}(f(\bm{\lambda}),\mathbf{x}) \bigg{|} \exists i \in [n] \text{ s.t. } |\bm{\lambda}[i]| > b_{\lambda} \right]   \times \underbrace{\operatorname{Pr}\left(\exists i \in [n] \text{ s.t. } |\bm{\lambda}[i]| > b_{\lambda} \right)}_{\text {prob. that input is unbounded}}
 \end{align}

 \noindent We obtain the following inequality using the fact that any probability is upper bounded by 1 for the term $\operatorname{Pr}\left(\forall i \in [n] \text{ s.t. } |\bm{\lambda}[i]| \leq b_{\lambda}\right)$, and that the true risk conditioned on the event that the log-likelihood ratios are unbounded is also upper bounded by $1$.
 \begin{align}
  \operatorname{Pr}(l_\text{BER}(f(\bm{\lambda}),\mathbf{x}) > 0)   & \leq  \mathop{\mathbb{E}}\left[l_\text{BER}(f(\bm{\lambda}),\mathbf{x}) \bigg{|} \forall i \in [n] \text{ s.t. } |\bm{\lambda}[i]| \leq b_{\lambda}\right]  +  \operatorname{Pr}\left(\exists i \in [n] \text{ s.t. } |\bm{\lambda}[i]| > b_{\lambda} \right)
   \label{eq: total-probability}
 \end{align}
 
 \noindent Using the fact that the channel outputs are i.i.d to compute $\operatorname{Pr}\left(\forall i \in [n] \text{ s.t. } |\bm{\lambda}[i]| \leq b_{\lambda}\right)$ as,
 
 \begin{align}
     \operatorname{Pr}\left(\forall i \in [n] \text{ s.t. } |\bm{\lambda}[i]| \leq b_{\lambda}\right) = \prod_{i = 1}^{n} \operatorname{Pr}\left( |\bm{\lambda}[i]| \leq b_{\lambda}\right)
 \end{align}
 We consider that the signal is modulated by binary phase shift keying (BPSK) modulation such that $\operatorname{Pr}(+1) = \operatorname{Pr}(-1) = \frac{1}{2} $. 
 The channel is AWGN channel  with noise variance $\beta^2$, then $\bm{\lambda}[i] = 2 {\mathbf{y}[i]}/{\beta^2}$.
 We can upper bound the term $\operatorname{Pr}\left( |\mathbf{y}[i]| \leq \frac{\beta^2 b_{\lambda}}{2} \right)$ using Q-function as follows.
 \begin{align}
     \operatorname{Pr}\left( |\bm{\lambda}[i]| \leq b_{\lambda}\right) = \left(1 - Q\left(\frac{\beta^2 b_{\lambda}+ 2}{2\beta} \right) - Q\left(\frac{\beta^2 b_{\lambda}- 2}{2\beta} \right) \right),
     \label{eq: prob-input-is-bounded}
 \end{align}
Then, the term  $\operatorname{Pr}\left(\exists i \in [n] \text{ s.t. } |\bm{\lambda}[i]| > b_{\lambda} \right)$ is computed as,
 \begin{align}
     \operatorname{Pr}\left(\exists i \in [n] \text{ s.t. } |\bm{\lambda}[i]| > b_{\lambda} \right) = 1- \left(1 - Q\left(\frac{\beta^2 b_{\lambda}+ 2}{2\beta} \right) - Q\left(\frac{\beta^2 b_{\lambda}- 2}{2\beta} \right)\right)^n
     \label{eq: prob-input-is-unbounded}
 \end{align}
 Substituting the values of \eqref{eq: prob-input-is-unbounded} in \eqref{eq: total-probability} completes the proof of Theorem \ref{Proposition-2: unbounded-input-channel-snr}.

\end{document}